%
%
%
%
%
\RequirePackage{fix-cm}
\documentclass[twocolumn]{svjour3}          
\smartqed  
\usepackage{graphicx}
%
%
\usepackage{bbm}
\usepackage{amsmath, hyperref}
\usepackage{tikz}
\usetikzlibrary{math,calc}
\usepackage{pgfplots}
\usepackage{breakcites}
%
\usepackage{graphics}
\usepackage{epstopdf}
\usepackage{hhline}
\usepackage{tikz}
\usetikzlibrary{arrows,automata}
\usepackage{pgfplots}

\setlength{\emergencystretch}{1em}

\usepackage[absolute,overlay]{textpos}

\usepackage{hyperref}
\hypersetup{
  colorlinks=true,
  linkcolor=black,
  filecolor=red,      
  urlcolor=black,
  citecolor=blue
}

\usepackage{graphicx}
\usepackage{ragged2e}
\usepackage{tabto}
\usepackage{float}
\usepackage{multirow}
\usepackage{amsmath}
\usepackage{amssymb}
\usepackage{mathtools}

\usepackage{xcolor}
\definecolor{darkred}{RGB}{173,0,43}

\usepackage[nottoc]{tocbibind}

\usepackage[font=footnotesize]{subcaption}
\usepackage[font=footnotesize]{caption}

\usepackage{multirow}
\usepackage{multicol}

\usepackage{bm}
\usepackage{dsfont}

\newcommand\vct[1]{\bm{\mathsf{#1}}}
\newcommand\mtx[1]{\bm{\mathsf{#1}}}

\newcommand{\tran}{^{\mathsf{T}}}

\newcommand{\lp}{\left(}
\newcommand{\rp}{\right)}

\usepackage[sort,compress]{cite}


\newcommand{\cK}{\mathcal{K}}
\newcommand{\cN}{\mathcal{N}}
\newcommand{\R}{\mathbb{R}}
\newcommand{\bbE}{\mathbb{E}}

\DeclareMathOperator*{\argmin}{argmin}

\newtheorem{algorithm0}{Algorithm}

\numberwithin{equation}{section}

%
\journalname{Statistics and Computing}
\begin{document}

\title{Efficient reduced-rank methods for Gaussian processes with eigenfunction expansions}

\author{Philip Greengard         \and
        Michael O'Neil
}



\institute{Philip Greengard \at
              Columbia University \\
              \email{pg2118@columbia.edu}      
           \and
           Michael O'Neil \at
           Courant Institute, NYU \\
           \email{oneil@cims.nyu.edu}
}

\date{August 24, 2022}

\maketitle

\begin{abstract}
  In this work we introduce a reduced-rank algorithm for Gaussian
  process regression. Our numerical scheme converts a Gaussian process
  on a user-specified interval to its Karhunen-Lo\`eve expansion, the
  \mbox{$L^2$-optimal} reduced-rank representation. Numerical 
  evaluation of the Karhunen-Lo\`eve expansion is performed once
  during precomputation and involves computing a numerical
  eigendecomposition of an integral operator whose kernel is the
  covariance function of the Gaussian process. 
  The Karhunen-Lo\`eve expansion is independent of observed data and
  depends only on the covariance kernel and the size of the interval on which
  the Gaussian process is defined. 
  The scheme of this paper does not
  require translation invariance of the covariance kernel.  We also
  introduce a class of fast algorithms for Bayesian fitting of
  hyperparameters, and demonstrate the performance of our algorithms
  with numerical experiments in one and two dimensions. Extensions to
  higher dimensions are mathematically straightforward but suffer from
  the standard curses of high dimensions.
\end{abstract}

\keywords{Gaussian processes \and
Karhunen-Lo\`eve expansions \and eigenfunction expansions \and reduced-rank
regression}

\newpage

\section{Introduction}
Over the past two decades there has been vast interest in modeling
with Gaussian processess (GPs) \cite{rasmus} across a range of
applications including astrophysics, epidemiology, ecology, climate
science, financial mathematics, and political science~\cite{dfm1, bda,
  stein1, gonzalvez1}. In many cases, the main limitation of Gaussian
process regression as a practical statistical tool is its prohibitive
computational cost (when the calculations are done directly).
Classical direct algorithms for Gaussian process regression with~$N$
data points incur an~$O(N^3)$ computational cost which, for many
modern problems, is unfeasible. As a result, there has been much
effort directed toward asymptotically efficient or approximate
computational methods for modeling with
Gaussian processes. 

Most of these methods~\cite{candela1, oneil1, datta1, minden1, dfm1,
  solin1, aki1} involve some sort of (fast) approximate inversion of the
covariance matrix~$\mtx{C}$ that appears in the likelihood
function~$p$ of a Gaussian process
\begin{equation}
  p(\vct{y}) \propto  \frac{1}{|\mtx{C}|^{1/2}} \,
  \exp \lp -\frac{1}{2} \vct{y}\tran \mtx{C}^{-1} \vct{y} \rp.
\end{equation}
In particular, reduced-rank algorithms approximate the~$N \times N$
covariance matrix~$\mtx{C}$ with a \emph{global} rank-$m$
factorization~$\mtx{X}\mtx{X}\tran$ such that
\begin{equation}
\Vert \mtx{C} - \mtx{X X}\tran \Vert <\epsilon,
\end{equation}
where~$\mtx{X}$ is an $N \times m$ matrix and~$\epsilon$ is some
tolerance chosen based on the application at hand. The quadratic
form~$\vct{y}\tran \mtx{C}^{-1} \vct{y}$ can then be computed in the
least-squares-sense.
Usually, reduced-rank algorithms rely on rough
approximations of the covariance matrix (such as standard Nystr\"om
methods whereby rows or columns are randomly sub-sampled) or they require certain
assumptions about the covariance kernel and distribution of
data points.
Furthermore, these low-rank approximation ideas can be used in a
\emph{locally} recursive fashion, as in the algorithms of~\cite{oneil1, minden1}, to
construct a hierarchical factorization of the covariance matrix that
allows for direct inversion in~$\mathcal O(N)$ time
for a 
reasonably general choice of covariance function. However, these
algorithms rely on the covariance matrix having a particular low-rank
structure away from the diagonal,
and can be prohibitively slow when, for certain kernels or
data, this condition is not met (or in the case when the ambient
dimension of the observations is high).

In the numerical methods of this paper, we decompose a Gaussian
process defined on an interval (or a rectangular region of $\R^d$)
into a global expansion of fixed basis functions with random coefficients
that is optimally accurate in the $L^2$-sense. Specifically, we
introduce a numerical method for approximating a continuous Gaussian
process using its Karhunen-Lo\`eve (KL) expansion~\cite{loeve1,xiu1}.
This approach is of course a global one, and does not apply any 
hierarchical compression strategy directly to the induced covariance
matrix itself. We merely provide the numerical tools to optimally
compress, to any desired precision, a Gaussian process onto a
lower-rank subspace using its associated eigenfunction expansion,
independently of where the process was sampled.
For a Gaussian process with covariance kernel~$k: \R^d \times \R^d \to
\R$, using such a KL expansion allows a Gaussian process
\begin{equation}
  \label{eq:gp}
  f \sim \mathcal{GP}(0, k)
\end{equation}
defined on a region~$D \subset \R^d$ to be reformulated using an
expansion of the form
\begin{equation}\label{765}
f(x) = \sum_{i = 1}^\infty   \alpha_i \,  \phi_i(x),
\end{equation}
where the~$\phi_i$'s are (properly scaled) eigenfunctions of the integral
operator~$\cK$ defined by
\begin{equation}\label{484}
\cK\mu(x) = \int_D k(x, y) \, \mu(y) \, dy, \qquad \text{for } x
\in D,
\end{equation}
and where the~$\alpha_i$'s are IID normal random variables. Each
of the eigenfunctions therefore satisfies the relationship
\begin{equation}
  \lambda_i \phi_i(x) = \int_D k(x,y) \, \phi_i(y) \, dy.
\end{equation}
For the sake of convenience, in what follows we will always assume
that the eigenfunctions~$\phi_i$ have been ordered according to the
magnitude of the corresponding eigenvalue:~$\lambda_1 \geq \lambda_2
\geq \lambda_3 \geq \ldots$.
Furthermore, we will assume that the covariance kernel is
square-integrable on~$D \times D$, and therefore that the integral
operator~$\cK$ is bounded and compact when acting on square integrable
functions. This situation covers most widely used covariance
kernels~\cite{riesz1955}.
The above reformulation is valid for all $x \in D$, and the infinite
expansion in terms of the~$\phi_i$'s can be truncated depending on the
desired accuracy in approximating the covariance function (in the
least-squares sense).
We refer to KL-expansion~\eqref{765} truncated at $m$ terms to be the 
order-$m$ KL-expansion
\begin{equation}\label{765b}
  \sum_{i = 1}^m   \alpha_i \,  \phi_i(x).
\end{equation}
The KL-expansion
has several advantages over other low-rank compression techniques, a
primary advantage being that it provides the  optimal compression of a Gaussian
process in the~$L^2$ sense and can be performed independent of the
distribution of sample points of the process. In particular, if
$f$ is a mean-zero Gaussian process
on an interval $[a, b] \subset \R$ with covariance function $k$,
then \emph{any} order-$m$ reduced-rank approximation of the form
\begin{equation}\label{530}
  f(x) \approx 
\alpha_1 g_1(x) + ... + \alpha_m g_m(x),
\end{equation}
where the $g_i$'s are fixed basis functions and $\alpha_i$'s are IID
$\cN(0, 1)$ random variables, 
has an effective covariance kernel~$k_n$ defined by
\begin{equation}
  \begin{aligned}
    k_n(x, y) &= \mathbb{E} \left [ f(x) \, f(y) \right] \\
    &= \bbE \left[ \lp \sum_{i=1}^m \alpha_i g_i(x) \rp \lp
  \sum_{i=1}^m \alpha_i g_i(y) \rp 
\right] \\
&=  \sum_{i,j = 1}^m g_i(x) g_j(y) \, \bbE \left[ \alpha_i \alpha_j
\right]  \\
&= \sum_{i = 1}^m g_i(x) g_i(y) .
\end{aligned}
\end{equation}
Among all order-$m$
reduced-rank Gaussian process approximations~\eqref{530}, the
effective covariance kernel of the order-$m$ KL expansion in~\eqref{765b}, 
denoted by $k_{\text{KL}}$, satisfies
\begin{equation}\label{3101}
k_{\text{KL}} = \argmin_{k'} \int_a^b \int_a^b \lp k(x, y) - k'(x, y)
\rp^2 \,  dx \, dy,
\end{equation}
where~$k$ is the exact covariance function in~\eqref{eq:gp},
see~\cite{trefethen}.
A similar approach to parameterizing
random functions is discussed in~\cite{filip2019random}, whereby the
default expansion is taken to be in terms of Chebyshev polynomials
(i.e. trigonometric polynomials) instead of the true Karhunen-Lo\`eve
expansion.

After converting a Gaussian process to its KL-expansion, performing
statistical inference is drastically simplified. For example,
 in the canonical Gaussian process
regression task with~$N$ data points $\{(x_i, y_i)\}$, the regression model
\begin{equation}
  y \sim f(x) + \epsilon,
\end{equation}
where 
\begin{equation}
  \begin{aligned}
   \epsilon \, | \, x &\sim \cN (0,\sigma^2), \\
   f  & \sim \mathcal{GP}(0, k(x, x')),
\end{aligned}
\end{equation}
has a closed-form solution that requires~$O(Nm^2)$ operations
where~$m$ is the length of the KL-expansion. 
We also introduce an algorithm
for computing posterior moments of fully Bayesian Gaussian process
regression in which we fit two hyperparameters of the covariance
function, namely the timescale and the magnitude.

The theoretical properties of KL-expansions have been well-understood
for many years. However, in applied statistics communities, the use of 
\emph{high-order approximations} of KL-expansions has been virtually nonexistent.
Presumably one reason for this is the lack of standard tools (available in
statistics-focused software packages) for numerically computing
eigendecompositions of continuous operators.  On the other hand, in
the applied mathematics and computational physics communities, there
is a large body of analysis of integral operators and numerical tools
for their discretization (see, for example, \cite{kress1}) including
finite element algorithms for computing KL-expansions~\cite{schwab1}. 
In this paper, the primary numerical tools
we exploit for computing KL-expansions belong to a well-known class
of Nystr\"om methods~\cite{yarvin} for computing eigendecompositions
of integral operators. The dominant cost of the numerical scheme is
the diagonalization of a symmetric, positive semi-definite matrix
whose dimension scales as the number of quadrature nodes needed to
accurately discretize it. This diagonalization is performed at most once
during precomputation.


Several basis function approaches have achieved popularity in the Gaussian
process community, such as the Fourier-based
\cite{lazaro2010, rahimi2007}. Like in our approach, in \cite{lazaro2010}, a basis function 
expansion is constructed such that its effective covariance kernel 
approximates some desired kernel. Their method benefits from
the fact that Fourier basis function expansions are essentially free to compute 
and have analytical properties that are well-known. The primary advantage of 
the KL-expansion over other basis function approaches, including Fourier
methods, is that the KL-expansion is an optimal compression in $L^2$,
see~\eqref{3101}. As a general matter, the cost of performing 
Gaussian process regression with basis function approaches is $O(Nm^2)$, where $N$ is
the number of data points and $m$ is the number of basis functions.
As a result, reducing the number of basis functions in a Gaussian process
representation can be crucial for practical use and can result in 
substantial computational savings. 
 
The scheme of this paper is similar in spirit to that
of~\cite{solin1}.  In~\cite{solin1}, the authors introduce a method
that approximates the KL-expansion by first representing the integral
operator~$\cK$ in~\eqref{484} as a finite linear combination of powers 
of the Laplace operator, and then subsequently approximating the 
eigenfunctions of that operator.  In this paper, we directly compute high-order
approximations to eigenfunctions of~$\cK$ with quadrature-based methods.

While the mathematical properties of KL-expansions generalize naturally
to arbitrary dimensions, their use as a practical statistical tool is limited 
to around $3$ dimensions, or $4$ for very smooth kernels. 
This is due to the standard curse of dimensionality -- 
for a given level of accuracy, the number
of basis functions needed in $d$ dimensions scales as $m^d$ where
$m$ is the number of functions needed in $1$ dimension. This exponential
scaling makes the algorithms of this paper impractical for high-dimensional environments. In particular, there are two computations that become computationally intractable.  
First, construction of the KL-expansion in $d$ dimensions would involve an eigendecomposition of 
a $m^d \times m^d$ matrix, a procedure that requires $O(m^{3d})$ operations. 
Similarly, the linear system of the Gaussian process regression 
task would require solving an $m^{d} \times m^{d}$ linear system which also
requires $O(Nm^{2d})$ operations where $N$ is the number of data 
points. It is, however, likely that the methods of this paper could be
used in conjunction with a spatially adaptive low-dimensional
approximation of the data in order to (locally) reduce the ambient
dimension of the problem.

The remainder of this paper is structured as follows. In the following
section we provide background on mathematical concepts that will be
used in subsequent sections. In Section~\ref{s531} we describe the
primary numerical scheme of this paper -- an algorithm for computing
KL-expansions. Section~\ref{sec:regression} contains a description of
how the algorithms of this paper can be used in Gaussian process
regression; we then describe an efficient algorithm for Bayesian Gaussian
process regression in Section~\ref{s550}. Sections~\ref{s560}
and~\ref{s570} contain numerical methods for computing KL-expansions for
non-smooth kernels and high-dimension Gaussian process problems,
respectively.
We
provide the results of numerical implementations of the algorithms of
this paper in Section~\ref{s580}, and lastly, in
Section~\ref{sec:conclusions} we offer some ideas regarding future
directions of these techniques as well as a discussion of the main
failure-mode of the algorithm.

\section{Mathematical apparatus}
\label{sec:math}

We start by introducing some background on Gaussian processes and
approximation theory that will be used throughout the paper.

\subsection{Gaussian Processes}
Given a mean function~$m: \R^d \to \R$ and a covariance function~$k:
\R^d \times \R^d \to \R$, a
Gaussian process is a random function $f: \R^d \to \R$, denoted by
\begin{equation}
f \sim \mathcal{GP}(m(x), k(x,x')),
\end{equation}
such that for any collection of points $x_1, \ldots, x_N \in \R^d$,
we have that
\begin{equation}
\lp f(x_1) \cdots f(x_N) \rp\tran \sim \mathcal{N}(\vct{m}, \mtx{C})
\end{equation}
where the mean $\vct{m} \in \R^N$ satisfies
\begin{equation}
\vct{m} = \lp m(x_1) \cdots m(x_N) \rp\tran,
\end{equation}
and where~$\mtx{C}$ is an~$N \times N$ covariance matrix with entries
\begin{equation}
  \label{eq:cij}
  \mtx{C}_{ij} = k(x_i, x_j).
\end{equation}
For the remainder of this paper we assume $m(x) = 0$ for
convenience. As in much of the computational Gaussian process
literature, this assumption has no impact on the methods
of this paper.
The function~$k$ must satisfy particular properties to
ensure the positivity of the underlying probability measure. Namely,
for any choice of the $x_i$'s above, the matrix~$\mtx{C}$ defined
by~\eqref{eq:cij} must be symmetric positive
semi-definite~\cite{cressie1}.  In the case where the covariance
function~$k$ is translation invariant (i.e.~$f$ is a stationary
process), $k$ is a function of  $|x-y|$, and Bochner's 
Theorem~\cite{rasmus} shows that $k$
is admissable if and only if its Fourier transform is real and
non-negatively valued. We merely point out this as a fact, but
will not make use of it explicitly in this work as our methods also apply
to kernels that are not translation invariant.

Furthermore, the following is a well-known theorem that we will in
fact exploit when discretizing the integral operator associated with
the covariance kernel of Gaussian processes~\cite{stoer}. We state the
theorem in the one-dimensional case, but it of course can be extended
analogously to arbitrary dimensions.
\begin{theorem}[Mercer's Theorem]\label{210b}
  Let $k$ be a continuous, symmetric, positive semi-definite kernel
  defined on~$[a,b] \times [a,b]$. Then the integral operator
\begin{equation}
\cK f(x) = \int_a^b k(x, x') \, f(x') \, dx'
\end{equation}
has real, non-negative eigenvalues~$\lambda_i$ with corresponding
eigenfunctions~$u_i$. We assume that the eigenfunctions 
$u_i$ have $L^2$ norm of $1$. The kernel~$k$ then can be written as
\begin{equation}
k(x, x') = \sum_{i=1}^\infty \lambda_i \, u_i(x) \, u_i(x'),
\end{equation}
where convergence is absolute and uniform.
\end{theorem}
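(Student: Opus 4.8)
The plan is to combine the spectral theory of compact self-adjoint operators with a positivity argument that upgrades $L^2$ convergence to uniform convergence. First I would observe that because $k$ is continuous on the compact set $[a,b] \times [a,b]$ it is bounded and square-integrable, so $\cK$ is Hilbert--Schmidt and hence compact, while symmetry of $k$ makes $\cK$ self-adjoint on $L^2([a,b])$. The spectral theorem for compact self-adjoint operators then furnishes an orthonormal sequence of eigenfunctions $u_i$ with real eigenvalues $\lambda_i \to 0$, and positive semi-definiteness of $k$ forces $\lambda_i \geq 0$. Expanding the kernel $k \in L^2([a,b]\times[a,b])$ in the orthonormal basis $\{u_i(x) u_j(x')\}$ and computing the coefficients as $\langle \cK u_j, u_i\rangle = \lambda_i \delta_{ij}$ already gives $k(x,x') = \sum_i \lambda_i u_i(x) u_i(x')$ with convergence in $L^2([a,b]\times[a,b])$; the real content of the theorem is to promote this to absolute and uniform convergence. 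I would also record here that each eigenfunction with $\lambda_i > 0$ is continuous, since $u_i(x) = \lambda_i^{-1} \int_a^b k(x,y) u_i(y)\,dy$ inherits continuity from $k$.

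The central device is the truncated kernel $k_n(x,x') = \sum_{i=1}^n \lambda_i u_i(x) u_i(x')$ and its remainder $R_n = k - k_n$, which is the kernel of $\cK$ acting on the orthogonal complement of $\mathrm{span}\{u_1,\dots,u_n\}$. Hence $R_n$ is continuous, symmetric, and has operator eigenvalues $\lambda_{n+1}, \lambda_{n+2}, \dots \geq 0$, so it is a positive semi-definite operator; by continuity this is equivalent to $R_n$ forming non-negative Gram matrices. The key elementary fact is that a continuous positive semi-definite kernel has a non-negative diagonal, so $R_n(x,x) \geq 0$, i.e.
\[
  \sum_{i=1}^n \lambda_i u_i(x)^2 \;\leq\; k(x,x) \;\leq\; M := \max_{x \in [a,b]} k(x,x)
\]
for every $n$ and every $x$. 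Thus the diagonal series converges pointwise and is bounded by $M$. The Cauchy--Schwarz inequality applied to the partial sums,
\[
  \Big| \sum_{i=m+1}^n \lambda_i u_i(x) u_i(x') \Big|
  \leq \Big( \sum_{i=m+1}^n \lambda_i u_i(x)^2 \Big)^{1/2}
       \Big( \sum_{i=m+1}^n \lambda_i u_i(x')^2 \Big)^{1/2},
\]
then reduces the entire problem to establishing that the diagonal series $s_n(x) = \sum_{i=1}^n \lambda_i u_i(x)^2$ converges uniformly in $x$.

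To break the apparent circularity -- Dini's theorem would give uniform convergence of the increasing continuous partial sums $s_n$ provided the limit is continuous, yet continuity of the limit is exactly what we wish to certify -- I would first fix $x$ and use the displayed Cauchy--Schwarz bound together with $\sum_{i=m+1}^n \lambda_i u_i(x')^2 \leq M$ to show that $\sum_i \lambda_i u_i(x) u_i(\cdot)$ is uniformly Cauchy in the second variable, hence converges uniformly to a continuous function of $x'$. Since this limit agrees with $k(x,\cdot)$ in $L^2$ and both are continuous, they coincide everywhere; evaluating on the diagonal yields $\sum_i \lambda_i u_i(x)^2 = k(x,x)$. The diagonal limit is therefore the continuous function $k(x,x)$, so Dini's theorem now applies to $s_n \nearrow k(\cdot,\cdot)$ on the compact interval and gives uniform convergence on the diagonal. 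Feeding this back into the Cauchy--Schwarz estimate bounds the full remainder by $\sup_z \big(k(z,z) - s_n(z)\big) \to 0$, which delivers both the absolute and the uniform convergence claimed.

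The main obstacle is precisely this last upgrade from $L^2$ to uniform convergence: the spectral-theoretic and continuity facts are routine, but the uniform statement hinges on the positivity of the remainder kernel $R_n$ (which supplies the diagonal bound) together with the fixed-$x$ argument that certifies continuity of the diagonal limit \emph{before} Dini's theorem can legitimately be invoked.
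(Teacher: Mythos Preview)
Your argument is correct and is in fact the classical textbook proof of Mercer's theorem: compactness and self-adjointness of $\cK$, the spectral theorem, positivity of the remainder kernel $R_n$ yielding the diagonal bound, the fixed-$x$ Cauchy--Schwarz step to identify the diagonal limit as $k(x,x)$, and finally Dini's theorem to upgrade to uniform convergence. Each step is sound, and the care you take to justify continuity of the diagonal limit \emph{before} invoking Dini avoids the common circularity trap.

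However, the paper does not actually prove this theorem. Theorem~\ref{210b} is stated as background in the preliminaries section, followed only by the remark that it ``is merely a continuous version of the standard finite-dimensional result for symmetric non-negative definite matrices,'' with a reference to~\cite{stoer}. So there is no paper proof to compare against; you have supplied a full (and standard) proof where the authors chose to cite one.
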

The above theorem is merely a continuous version of the standard
finite-dimensional result for symmetric positive semi-definite
matrices.

\subsection{Karhunen-Lo\`eve expansions}
\label{sec:kl}

In this section we describe the theoretical basis of the algorithms we
use for low-rank compression of Gaussian processes. The central
analytical tool is a special case of the well-known Karhunen-Lo\`eve
theorem~\cite{xiu1}.
\begin{theorem}\label{410}(Karhunen-Lo\`eve)
  Let $f$ be a Gaussian process on $D \subset \R$ with covariance
  kernel~$k$. Then, for all $x\in D$ we have that~$f$ can be written as
\begin{equation}\label{380}
f(x) = \sum_{i=1}^\infty \alpha_i \, u_i(x)
\end{equation}
where for $i=1,2,...$, 
\begin{equation}
\alpha_i \sim \mathcal{N}\lp 0, \lambda_i\rp
\end{equation}
and the~$\lambda_i$'s and~$u_i$'s
are eigenvalues and eigenfunctions of the integral operator $\cK$ defined by
\begin{equation}\label{780}
\cK\mu(x) = \int_D k(x, x') \, \mu(x') \, dx'.
\end{equation}
\end{theorem}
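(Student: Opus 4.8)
The plan is to realize the coefficients $\alpha_i$ as projections of the process onto the eigenfunctions of $\cK$ and then verify, using Mercer's theorem (Theorem~\ref{210b}), that the resulting series converges to $f$ in mean square. First I would invoke Mercer's theorem to write the covariance kernel as
\[
k(x, x') = \sum_{i=1}^\infty \lambda_i \, u_i(x) \, u_i(x'),
\]
where $\{u_i\}$ is the $L^2(D)$-orthonormal system of eigenfunctions of $\cK$ with eigenvalues $\lambda_i \geq 0$, and the convergence is absolute and uniform. This supplies both the basis functions appearing in~\eqref{380} and, crucially, the uniform control on the tail that will drive the convergence argument at the end.

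Next I would define the candidate coefficients by the (mean-square) integrals
\[
\alpha_i = \int_D f(x) \, u_i(x) \, dx.
\]
Since $f$ is a mean-zero Gaussian process and each $\alpha_i$ is a linear functional of $f$, the collection $\{\alpha_i\}$ is jointly Gaussian with mean zero. A short computation using Fubini's theorem and the eigenrelation $\cK u_j = \lambda_j u_j$ from~\eqref{780} gives
\[
\bbE[\alpha_i \alpha_j] = \int_D \int_D k(x, y) \, u_i(x) \, u_j(y) \, dx \, dy = \lambda_i \, \delta_{ij},
\]
the last step following by orthonormality of the $u_j$. Hence the $\alpha_i$ are uncorrelated, and being jointly Gaussian they are therefore independent, with $\bbE[\alpha_i^2] = \lambda_i$; that is, $\alpha_i \sim \cN(0, \sqrt{\lambda_i})$ as claimed.

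It then remains to show that the partial sums $S_N(x) = \sum_{i=1}^N \alpha_i u_i(x)$ converge to $f(x)$. I would estimate the mean-square truncation error directly. Using $\bbE[f(x) \, \alpha_i] = \int_D k(x, y) \, u_i(y) \, dy = \lambda_i u_i(x)$ together with the covariance computed above, expanding the square causes most terms to cancel and leaves
\[
\bbE\left[\lp f(x) - \sum_{i=1}^N \alpha_i \, u_i(x) \rp^{2}\right] = k(x, x) - \sum_{i=1}^N \lambda_i \, u_i(x)^2.
\]
By the diagonal case of Mercer's theorem, $k(x,x) = \sum_{i=1}^\infty \lambda_i u_i(x)^2$, so the right-hand side is exactly the tail $\sum_{i=N+1}^\infty \lambda_i u_i(x)^2$, which tends to zero as $N \to \infty$, uniformly in $x$ by the uniform convergence guaranteed by Mercer. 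This establishes the expansion~\eqref{380} with convergence in mean square, uniformly over $D$.

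The main obstacle, and essentially the only place where real care is required, is the rigorous justification of the coefficient integrals and the interchange of expectation with integration. One must either assume enough regularity (e.g.\ mean-square-continuous, hence measurable, sample paths) for $\int_D f \, u_i \, dx$ to be well defined, or interpret these integrals in the mean-square sense; the Fubini step is then licensed by square-integrability of $k$ on $D \times D$. I would also emphasize that the convergence obtained is in mean square rather than pointwise almost surely, which is the standard form of the conclusion and is exactly what is needed for the $L^2$-optimality statements used elsewhere in the paper.
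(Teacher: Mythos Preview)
The paper does not actually prove Theorem~\ref{410}; it merely states it as a well-known result and cites~\cite{xiu1}. Your argument is the standard textbook proof of the Karhunen--Lo\`eve theorem via Mercer's expansion, and it is correct as written, including the care you take to flag the measurability/Fubini issue and the fact that the convergence obtained is in mean square rather than almost surely.
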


We refer to expansion (\ref{380}) as a Karhunen-Lo\`eve (KL)
expansion. As before, we will assume that the eigenfunctions are
ordered in terms of non-decreasing values of the associated
eigenvalues.

The eigenfunctions $u_i$ of (\ref{380}) are assumed to have unit
$L^2$ norm. That is
\begin{align}
\int_D |u_i(x)|^2 \, dx = 1
\end{align}
for all $i$. We will be denoting by $\phi_i$ a scaling of eigenfunction $u_i$
by the square root of its eigenvalue. Specifically,
\begin{align}\label{891}
\phi_i(x) = \sqrt{\lambda_i} u_i(x). 
\end{align}

The following theorem illustrates that a truncated KL-expansion with 
IID Gaussian coefficients can be used
as a practical tool to represent a Gaussian process distribution. 
In particular, the effective 
covariance function of a finite KL-expansion is the outer product
of the eigenfunctions of $\cK$ of (\ref{780}). Furthermore convergence of 
the outer product is sufficiently fast for practical purposes
 for a large class of covariance kernels.
\begin{theorem}\label{360}
Let $\hat{f}$ be defined by the order-$m$ \\
KL-approximation
\begin{equation}\label{745a}
\hat{f}(x) = \sum_{i=1}^m \alpha_i \phi_i(x).
\end{equation}
for all $x \in [-1, 1]$ where
\begin{equation}
\alpha \sim \cN(0, \mtx{I})
\end{equation}
and $\phi_i(x) = \sqrt{\lambda_i} u_i(x)$ where $\lambda_i$ and 
$u_i$ are the eigenvalues and eigenfunctions of integral operator
(\ref{780}).  Then $\hat{f}$ is a Gaussian process with covariance kernel 
\begin{equation}\label{330}
\hat{k}_m(x, x') = \sum_{i=1}^m \lambda_iu_i(x)u_i(x').
\end{equation}
Additionally, for smooth $k$
\begin{equation}
\| k  - \hat{k}_m \|_2^2 
\end{equation} 
decays exponentially in $m$. For kernels $k$ with continuous derivatives up to order $j$, the decay of $\| k  - \hat{k}_m \|_2^2$ is no slower than $O(1/m^{j+1})$. 
\end{theorem}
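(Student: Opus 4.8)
The statement splits naturally into three pieces, and I would begin with the Gaussian-process claim together with its covariance. Fix any finite set of points $x_1,\dots,x_N \in [-1,1]$ and assemble the evaluations as $\lp \hat f(x_1) \cdots \hat f(x_N) \rp\tran = \mtx{\Phi}\,\vct{\alpha}$, where $\mtx{\Phi}$ is the $N \times m$ matrix with entries $\mtx{\Phi}_{ni} = \phi_i(x_n)$ and $\vct{\alpha} \sim \cN(0,\mtx{I})$. Any linear image of a Gaussian vector is Gaussian, so this collection is jointly normal with mean zero, which is exactly the defining property of a mean-zero Gaussian process. Its covariance then follows as in the displayed calculation of the introduction: since $\bbE[\alpha_i\alpha_j] = \delta_{ij}$,
\[
\hat k_m(x,x') = \bbE\!\left[\hat f(x)\,\hat f(x')\right]
= \sum_{i,j=1}^m \phi_i(x)\,\phi_j(x')\,\bbE[\alpha_i\alpha_j]
= \sum_{i=1}^m \phi_i(x)\,\phi_i(x')
= \sum_{i=1}^m \lambda_i\, u_i(x)\, u_i(x'),
\]
which is~\eqref{330}.

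Next I would reduce the approximation error to a tail sum of squared eigenvalues. By Mercer's theorem (\cref{210b}) the exact kernel expands as $k(x,x') = \sum_{i=1}^\infty \lambda_i u_i(x) u_i(x')$, so $k - \hat k_m = \sum_{i>m}\lambda_i u_i(x) u_i(x')$. Squaring, integrating over $[-1,1]^2$, and using the $L^2$-orthonormality of the $u_i$ to annihilate every cross term leaves
\[
\| k - \hat k_m \|_2^2 = \sum_{i>m}\lambda_i^2 .
\]
The key observation for the rates is that this is precisely the minimal rank-$m$ approximation error of the Hilbert--Schmidt operator $\cK$ in the $L^2(D\times D)$ norm (the Schmidt/Eckart--Young theorem, using that for the symmetric positive semi-definite $\cK$ the Mercer expansion is its singular value decomposition). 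Hence $\sum_{i>m}\lambda_i^2 \le \|k - K\|_2^2$ for \emph{any} kernel $K$ whose operator has rank at most $m$, and it suffices to exhibit one good competitor.

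For that competitor I would take the degree-$(m-1)$ Legendre projection of $k$ in its first variable, $K(x,x') = \sum_{p=0}^{m-1}\overline{P}_p(x)\,b_p(x')$ with $b_p(x') = \int_{-1}^1 k(x,x')\,\overline{P}_p(x)\,dx$; this $K$ has operator rank at most $m$, and orthonormality of the $\overline{P}_p$ gives $\|k - K\|_2^2 = \int_{-1}^1 \sum_{p\ge m}|b_p(x')|^2\,dx'$, the integrated tail of the one-dimensional Legendre expansions of the slices $k(\cdot,x')$. The two decay statements then reduce to classical one-dimensional approximation theory applied uniformly in $x'$: when $k$ is smooth (analytic in the first variable) the Legendre coefficients decay geometrically, $|b_p(x')| = O(\rho^{-p})$ for some $\rho>1$, so the tail decays exponentially in $m$; when $k \in C^j$ on the square, Jackson-type estimates bound the best degree-$(m-1)$ error of each slice by $o(m^{-j})$ uniformly, which after squaring and integrating yields the stated $O(1/m^{j+1})$ bound for $j \ge 1$. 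I expect the main obstacle to be precisely this final step -- pinning down the exact algebraic exponent and the minimal regularity hypotheses (whether plain $C^j$ suffices, or one also needs the top derivative to be Lipschitz or of bounded variation) so that the squared-and-summed coefficient decay reproduces the claimed power $m^{-(j+1)}$ rather than a neighboring one. Everything upstream is routine bookkeeping once the reduction $\| k - \hat k_m \|_2^2 = \sum_{i>m}\lambda_i^2$ and the Eckart--Young optimality are in hand.
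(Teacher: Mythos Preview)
Your computation of the mean and covariance of $\hat f$ is essentially identical to the paper's: both expand $\bbE[\hat f(x)\hat f(x')]$, use $\bbE[\alpha_i\alpha_j]=\delta_{ij}$, and substitute $\phi_i=\sqrt{\lambda_i}\,u_i$ to arrive at \eqref{330}. Your extra sentence about linear images of Gaussian vectors is a slightly more careful justification of the Gaussian-process claim than the paper gives, but the content is the same.

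Where you diverge is on the convergence rates. The paper does not prove them at all: after obtaining \eqref{330} it simply writes ``Proofs of convergence rates \ldots\ can be found in, for example, \cite{trefethen}.'' You instead supply a self-contained argument: the tail-sum identity $\|k-\hat k_m\|_2^2=\sum_{i>m}\lambda_i^2$ (which the paper only states later, in the error-control discussion around \eqref{814}), then Schmidt/Eckart--Young optimality, then a Legendre-projection competitor reducing the question to classical one-variable polynomial approximation. This is a genuinely different and more informative route; it explains \emph{why} the rates hold rather than outsourcing them. The trade-off is exactly the one you flag at the end: pinning down the precise algebraic exponent from Jackson-type bounds requires care about the regularity hypotheses, and the paper sidesteps that by citation. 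Your concern there is legitimate, but the paper does not resolve it either.
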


\begin{proof}
For all $x$, clearly $\hat{f}(x)$ is Gaussian with mean given by,
\begin{equation}
\bbE[\hat{f}(x)] =  \sum_{i=1}^m \bbE[\alpha_i] \phi_i(x) = 0.
\end{equation}
Additionally, for all $x, x' \in [-1, 1]$, by independence of $\alpha_i$ and $\alpha_j$ for $i\neq j$, we have
\begin{equation}
  \begin{aligned}  
\bbE[\hat{f}(x)\hat{f}(x')] & =  \bbE \left[ \sum_{i=1}^m \alpha_i
  \phi_i(x) \sum_{i=1}^m \alpha_i \phi_i(x') \right] \\
& = \sum_{i=1}^m \phi_i(x)\phi_i(x')
  \end{aligned}
\end{equation}
Using (\ref{891}) we obtain
\begin{equation}\label{720}
\bbE[\hat{f}(x)\hat{f}(x')] = \sum_{i=1}^m \lambda_u u_i(x) u_j(x').
\end{equation}
Proofs of convergence rates of (\ref{720}) to $k$ can be found in, 
for example, \cite{trefethen}. 
\end{proof}

The existence of KL-expansions has been well-known since at
least the 1970s~\cite{loeve1}, however their use as a numerical
tool for Gaussian process regression has been virtually
nonexistent. This is mainly due to a lack of computing power and
numerical algorithms for computing the eigenfunctions and eigenvalues
used in~\eqref{380}. Recent advances in numerical computation,
primarily coming from the field of computational physics, has turned
the evaluation of eigendecompositions of
integral operators~(\ref{780}) into a well-understood
and computationally tractable
exercise.

We lastly note that the choice of region of integration is somewhat
arbitrary in the above theorem -- as long as the interval~$D$ contains
all observation points and points at which predictions wish to be
made, it is a suitable interval.  In the next section, we will
truncate the expansion in Theorem~\ref{410} to obtain an approximation
to the Gaussian process.  For a fixed kernel and fixed level of
accuracy increasing the size of the region~$D$ on which the Gaussian
process is defined does result in the need for a marginally larger KL
expansion. It is therefore advantageous from a computational
standpoint to choose a region that narrowly includes all points of
interest.

\section{Numerical computation of KL-expansions}
\label{s531}
In this section we describe a numerical scheme for computing the 
KL-expansion of a Gaussian process with a fixed covariance
function to any desired precision. We describe the algorithm in the
context of a Gaussian process defined on a region of $\R$, though
generalizations to higher dimensions are straightforward and in Section
\ref{s570} we provide the analogous algorithm for two-dimensional Gaussian
processes. For now, the
interval is chosen to be~$[-1, 1]$ out of convenience -- any interval
$[a, b]$ can be exchanged with $[-1, 1]$ along with the corresponding
transformation of Gaussian nodes and weights.

The algorithm consists mainly of computing
eigenfunctions and eigenvalues of the integral operator
$\cK: L^2[-1,1] \rightarrow L^2[-1, 1]$ defined by
\begin{equation}\label{210}
\cK f(x) = \int_{-1}^1 k(x, x') \, f(x') \, dx',
\end{equation}
where $k: [-1, 1]^2 \to \R$ is a covariance kernel. 
The numerical scheme discretizes the integral operator $\cK$ and represents 
the action of the integral operator on a function as a
matrix-vector multiplication. The eigenfunctions and eigenvalues
of~$\cK$ are then approximated with the eigenvectors and eigenvalues of
the matrix approximation to~$\cK$. The algorithm is well-known, and is a
slight variant of the algorithm contained in Section 4.3
of~\cite{yarvin}.
\begin{algorithm0}[Evaluation of KL-expansion] \, 
  \label{a1}
  \begin{enumerate}
  \item 
We start by constructing the $n \times n$ matrix $A$ defined by 
\begin{equation}\label{790}
\mtx{A}_{i, j} = \sqrt{w_iw_j}\, k(x_i, x_j)
\end{equation}
where 
\begin{equation}\label{735}
x_1,...,x_n
\end{equation}
denote the order-$n$ Gaussian nodes
\begin{equation}
w_1,...,w_n
\end{equation}
the order-$n$ Gaussian weights.
\item Compute the diagonal form of the symmetric matrix $\mtx{A}$. That is, find 
the orthogonal matrix $\mtx{U}$ and the diagonal matrix $\mtx{D}$ such that
\begin{equation}\label{370}
\mtx{A} = \mtx{UDU\tran}.
\end{equation}
We denote the $i$-th entry of the diagonal of $\mtx{D}$ by $\lambda_i$. 
\item Construct the $n \times n$ matrix $\hat{\mtx{U}} = [u_i] $ defined by 
\begin{equation}
\hat{\mtx{U}}_{i, j} = \mtx{U}_{i, j}/\sqrt{w_i}.
\end{equation}

\item Convert the eigenfunction approximations in $\hat{\mtx{U}}$ to a matrix $\mtx{A}$
of Legendre expansions. Do this by applying to $\hat{\mtx{U}}$ the 
matrix $\mtx{M}$ (see Theorem \ref{230}) 
that converts tabulations at Gaussian nodes to Legendre coefficients:
\begin{equation}
\mtx{A} = \mtx{M}\hat {\mtx{U}}.
\end{equation}

\item Evaluate the eigenfunction approximations $u_i : [-1, 1] \to \R$ by the formula
\begin{equation}\label{340}
u_i(x) = \sum_{j=1}^n \mtx{A}_{j, i} P_{j-1}(x)
\end{equation}
for all $x \in [-1, 1]$ and $i=1,2,...,k$ where $P_j$ denotes the order-$j$ Legendre polynomial. 

\item \label{671} Scale the eigenfunctions $u_i$ by the square root of the eigenvalues. That is, we define $\phi_i$ by
\begin{equation}\label{350}
\phi_i(x) = \sqrt{\lambda_i} u_i(x).
\end{equation}

\item The KL-expansion of length $m \leq n$ is given by
\begin{equation}\label{622}
\hat{f}(x) = \alpha_1 \phi_1(x) + \alpha_2\phi_2(x) + ... + \alpha_m \phi_m(x)
\end{equation}
for all $x \in [-1, 1]$ where $\alpha_i \sim \cN(0, 1)$ are IID Gaussian random variables. 
\end{enumerate}
\end{algorithm0}

We note that the scaling of the eigenfunctions in step \ref{671} of Algorithm
\ref{a1} is not strictly necessary, but enforces that the coefficients of the 
KL-expansion are all $\cN(0, 1)$ and consequently that Gaussian process
regression is the standard ridge regression. Figure \ref{2200} includes
plots of eigenfunctions $\phi_i$ of (\ref{350}) for a squared exponential
kernel. 

The computational cost of Algorithm \ref{a1} is $O(n^3)$ where $n$ 
is the number of discretization nodes. In the following section we describe
theoretical and numerical considerations for choosing $n$.

\begin{figure}[t!]
\centering
\begin{tikzpicture}[scale=1.0]

\begin{axis}[
    xmin=-1, xmax=1,
    ymin=-3, ymax=3,
    xtick={-1,-0.5,0.0,0.5,1},
    xlabel=$x$,
    ytick={-3,-2,-1,0,1,2,3},
    ylabel=$\phi_i(x)$,
    legend pos= south west
]
 
\addplot[dashed, color=red, line width=0.3mm]
    coordinates 
    {
(   -1.0000000000000000,   -0.1845141940154184)
(   -0.9797979797979798,   -0.2039445398692058)
(   -0.9595959595959596,   -0.2241885826268916)
(   -0.9393939393939394,   -0.2451494832175868)
(   -0.9191919191919192,   -0.2667245259759273)
(   -0.8989898989898990,   -0.2888076802120987)
(   -0.8787878787878788,   -0.3112920962513386)
(   -0.8585858585858586,   -0.3340724259428587)
(   -0.8383838383838383,   -0.3570468751196029)
(   -0.8181818181818181,   -0.3801189179800603)
(   -0.7979797979797980,   -0.4031986287745423)
(   -0.7777777777777778,   -0.4262036123205416)
(   -0.7575757575757576,   -0.4490595396699985)
(   -0.7373737373737373,   -0.4717003169360915)
(   -0.7171717171717171,   -0.4940679325366104)
(   -0.6969696969696970,   -0.5161120401307993)
(   -0.6767676767676767,   -0.5377893410671376)
(   -0.6565656565656566,   -0.5590628314736479)
(   -0.6363636363636364,   -0.5799009758757726)
(   -0.6161616161616161,   -0.6002768623773481)
(   -0.5959595959595959,   -0.6201673851037757)
(   -0.5757575757575757,   -0.6395524889271946)
(   -0.5555555555555556,   -0.6584145005308507)
(   -0.5353535353535352,   -0.6767375595142682)
(   -0.5151515151515151,   -0.6945071541605194)
(   -0.4949494949494949,   -0.7117097591082131)
(   -0.4747474747474747,   -0.7283325666866952)
(   -0.4545454545454545,   -0.7443633000733063)
(   -0.4343434343434343,   -0.7597900945484226)
(   -0.4141414141414141,   -0.7746014326826424)
(   -0.3939393939393939,   -0.7887861199597983)
(   -0.3737373737373737,   -0.8023332887765309)
(   -0.3535353535353535,   -0.8152324206439606)
(   -0.3333333333333333,   -0.8274733784764912)
(   -0.3131313131313130,   -0.8390464428738903)
(   -0.2929292929292928,   -0.8499423481358548)
(   -0.2727272727272727,   -0.8601523153034335)
(   -0.2525252525252525,   -0.8696680807599612)
(   -0.2323232323232323,   -0.8784819198461661)
(   -0.2121212121212120,   -0.8865866655781175)
(   -0.1919191919191918,   -0.8939757229472410)
(   -0.1717171717171716,   -0.9006430794804748)
(   -0.1515151515151515,   -0.9065833127972767)
(   -0.1313131313131313,   -0.9117915958651914)
(   -0.1111111111111110,   -0.9162637005659814)
(   -0.0909090909090908,   -0.9199960000699089)
(   -0.0707070707070706,   -0.9229854703981971)
(   -0.0505050505050504,   -0.9252296914467103)
(   -0.0303030303030303,   -0.9267268476549584)
(   -0.0101010101010101,   -0.9274757284362564)
(    0.0101010101010102,   -0.9274757284362581)
(    0.0303030303030305,   -0.9267268476549633)
(    0.0505050505050506,   -0.9252296914467142)
(    0.0707070707070707,   -0.9229854703981987)
(    0.0909090909090911,   -0.9199960000699060)
(    0.1111111111111112,   -0.9162637005659747)
(    0.1313131313131315,   -0.9117915958651852)
(    0.1515151515151516,   -0.9065833127972751)
(    0.1717171717171717,   -0.9006430794804774)
(    0.1919191919191920,   -0.8939757229472449)
(    0.2121212121212122,   -0.8865866655781188)
(    0.2323232323232325,   -0.8784819198461631)
(    0.2525252525252526,   -0.8696680807599549)
(    0.2727272727272729,   -0.8601523153034286)
(    0.2929292929292930,   -0.8499423481358518)
(    0.3131313131313131,   -0.8390464428738892)
(    0.3333333333333335,   -0.8274733784764919)
(    0.3535353535353536,   -0.8152324206439630)
(    0.3737373737373739,   -0.8023332887765335)
(    0.3939393939393940,   -0.7887861199597966)
(    0.4141414141414144,   -0.7746014326826356)
(    0.4343434343434345,   -0.7597900945484124)
(    0.4545454545454546,   -0.7443633000732989)
(    0.4747474747474749,   -0.7283325666866945)
(    0.4949494949494950,   -0.7117097591082177)
(    0.5151515151515154,   -0.6945071541605218)
(    0.5353535353535355,   -0.6767375595142635)
(    0.5555555555555556,   -0.6584145005308425)
(    0.5757575757575759,   -0.6395524889271883)
(    0.5959595959595960,   -0.6201673851037742)
(    0.6161616161616164,   -0.6002768623773497)
(    0.6363636363636365,   -0.5799009758757749)
(    0.6565656565656568,   -0.5590628314736470)
(    0.6767676767676769,   -0.5377893410671331)
(    0.6969696969696970,   -0.5161120401307944)
(    0.7171717171717173,   -0.4940679325366089)
(    0.7373737373737375,   -0.4717003169360949)
(    0.7575757575757578,   -0.4490595396700006)
(    0.7777777777777779,   -0.4262036123205389)
(    0.7979797979797982,   -0.4031986287745388)
(    0.8181818181818183,   -0.3801189179800595)
(    0.8383838383838385,   -0.3570468751196022)
(    0.8585858585858588,   -0.3340724259428575)
(    0.8787878787878789,   -0.3112920962513353)
(    0.8989898989898992,   -0.2888076802120961)
(    0.9191919191919193,   -0.2667245259759286)
(    0.9393939393939394,   -0.2451494832175813)
(    0.9595959595959598,   -0.2241885826268937)
(    0.9797979797979799,   -0.2039445398692022)
(    1.0000000000000000,   -0.1845141940154173)
};

 \addplot[
       dash dot, color=black,  line width=0.3mm
    ]
    coordinates 
    {
(   -1.0000000000000000,   -0.3472800669374392)
(   -0.9797979797979798,   -0.3810462645202798)
(   -0.9595959595959596,   -0.4155848959752597)
(   -0.9393939393939394,   -0.4506167779764693)
(   -0.9191919191919192,   -0.4858506098302390)
(   -0.8989898989898990,   -0.5209890629388604)
(   -0.8787878787878788,   -0.5557347289691020)
(   -0.8585858585858586,   -0.5897956844172507)
(   -0.8383838383838383,   -0.6228904675838989)
(   -0.8181818181818181,   -0.6547523130131219)
(   -0.7979797979797980,   -0.6851325436912361)
(   -0.7777777777777778,   -0.7138030780139774)
(   -0.7575757575757576,   -0.7405580622560871)
(   -0.7373737373737373,   -0.7652146862221857)
(   -0.7171717171717171,   -0.7876132771013034)
(   -0.6969696969696970,   -0.8076167926077082)
(   -0.6767676767676767,   -0.8251098487559518)
(   -0.6565656565656566,   -0.8399974206535750)
(   -0.6363636363636364,   -0.8522033479471554)
(   -0.6161616161616161,   -0.8616687620933446)
(   -0.5959595959595959,   -0.8683505328458718)
(   -0.5757575757575757,   -0.8722198087056797)
(   -0.5555555555555556,   -0.8732607028416657)
(   -0.5353535353535352,   -0.8714691540542131)
(   -0.5151515151515151,   -0.8668519731455955)
(   -0.4949494949494949,   -0.8594260694841275)
(   -0.4747474747474747,   -0.8492178410070302)
(   -0.4545454545454545,   -0.8362627033722762)
(   -0.4343434343434343,   -0.8206047300781308)
(   -0.4141414141414141,   -0.8022963745309253)
(   -0.3939393939393939,   -0.7813982465489189)
(   -0.3737373737373737,   -0.7579789189119602)
(   -0.3535353535353535,   -0.7321147436227593)
(   -0.3333333333333333,   -0.7038896619571786)
(   -0.3131313131313130,   -0.6733949966978271)
(   -0.2929292929292928,   -0.6407292188529062)
(   -0.2727272727272727,   -0.6059976844732705)
(   -0.2525252525252525,   -0.5693123398156978)
(   -0.2323232323232323,   -0.5307913950635144)
(   -0.2121212121212120,   -0.4905589681691362)
(   -0.1919191919191918,   -0.4487447012226726)
(   -0.1717171717171716,   -0.4054833521866723)
(   -0.1515151515151515,   -0.3609143649782585)
(   -0.1313131313131313,   -0.3151814208241669)
(   -0.1111111111111110,   -0.2684319736424342)
(   -0.0909090909090908,   -0.2208167719787248)
(   -0.0707070707070706,   -0.1724893697892710)
(   -0.0505050505050504,   -0.1236056281438611)
(   -0.0303030303030303,   -0.0743232097363712)
(   -0.0101010101010101,   -0.0248010679427551)
(    0.0101010101010102,    0.0248010679427553)
(    0.0303030303030305,    0.0743232097363732)
(    0.0505050505050506,    0.1236056281438644)
(    0.0707070707070707,    0.1724893697892738)
(    0.0909090909090911,    0.2208167719787265)
(    0.1111111111111112,    0.2684319736424347)
(    0.1313131313131315,    0.3151814208241688)
(    0.1515151515151516,    0.3609143649782635)
(    0.1717171717171717,    0.4054833521866800)
(    0.1919191919191920,    0.4487447012226811)
(    0.2121212121212122,    0.4905589681691423)
(    0.2323232323232325,    0.5307913950635184)
(    0.2525252525252526,    0.5693123398157007)
(    0.2727272727272729,    0.6059976844732746)
(    0.2929292929292930,    0.6407292188529111)
(    0.3131313131313131,    0.6733949966978344)
(    0.3333333333333335,    0.7038896619571879)
(    0.3535353535353536,    0.7321147436227706)
(    0.3737373737373739,    0.7579789189119714)
(    0.3939393939393940,    0.7813982465489255)
(    0.4141414141414144,    0.8022963745309258)
(    0.4343434343434345,    0.8206047300781290)
(    0.4545454545454546,    0.8362627033722780)
(    0.4747474747474749,    0.8492178410070397)
(    0.4949494949494950,    0.8594260694841427)
(    0.5151515151515154,    0.8668519731456089)
(    0.5353535353535355,    0.8714691540542200)
(    0.5555555555555556,    0.8732607028416675)
(    0.5757575757575759,    0.8722198087056821)
(    0.5959595959595960,    0.8683505328458804)
(    0.6161616161616164,    0.8616687620933579)
(    0.6363636363636365,    0.8522033479471683)
(    0.6565656565656568,    0.8399974206535830)
(    0.6767676767676769,    0.8251098487559545)
(    0.6969696969696970,    0.8076167926077096)
(    0.7171717171717173,    0.7876132771013110)
(    0.7373737373737375,    0.7652146862221997)
(    0.7575757575757578,    0.7405580622560977)
(    0.7777777777777779,    0.7138030780139799)
(    0.7979797979797982,    0.6851325436912372)
(    0.8181818181818183,    0.6547523130131270)
(    0.8383838383838385,    0.6228904675839043)
(    0.8585858585858588,    0.5897956844172535)
(    0.8787878787878789,    0.5557347289691024)
(    0.8989898989898992,    0.5209890629388619)
(    0.9191919191919193,    0.4858506098302476)
(    0.9393939393939394,    0.4506167779764658)
(    0.9595959595959598,    0.4155848959752693)
(    0.9797979797979799,    0.3810462645202786)
(    1.0000000000000000,    0.3472800669374407)
    };

\addplot[color=blue, line width=0.3mm
    ]
    coordinates 
    {
(   -1.0000000000000000,   -0.4709012587129028)
(   -0.9797979797979798,   -0.5102710734924099)
(   -0.9595959595959596,   -0.5490572350485798)
(   -0.9393939393939394,   -0.5867097383332728)
(   -0.9191919191919192,   -0.6226712890924553)
(   -0.8989898989898990,   -0.6563895549525427)
(   -0.8787878787878788,   -0.6873290720812459)
(   -0.8585858585858586,   -0.7149823686999383)
(   -0.8383838383838383,   -0.7388799344740417)
(   -0.8181818181818181,   -0.7585987507685062)
(   -0.7979797979797980,   -0.7737691931669074)
(   -0.7777777777777778,   -0.7840802164536486)
(   -0.7575757575757576,   -0.7892828258775186)
(   -0.7373737373737373,   -0.7891919205486052)
(   -0.7171717171717171,   -0.7836866605312657)
(   -0.6969696969696970,   -0.7727095557840531)
(   -0.6767676767676767,   -0.7562645017429134)
(   -0.6565656565656566,   -0.7344139940373353)
(   -0.6363636363636364,   -0.7072757460349705)
(   -0.6161616161616161,   -0.6750189111229368)
(   -0.5959595959595959,   -0.6378600808908594)
(   -0.5757575757575757,   -0.5960591947932485)
(   -0.5555555555555556,   -0.5499154602218642)
(   -0.5353535353535352,   -0.4997633473708719)
(   -0.5151515151515151,   -0.4459686931798987)
(   -0.4949494949494949,   -0.3889249244710917)
(   -0.4747474747474747,   -0.3290493927996394)
(   -0.4545454545454545,   -0.2667798024365973)
(   -0.4343434343434343,   -0.2025707076667388)
(   -0.4141414141414141,   -0.1368900552107943)
(   -0.3939393939393939,   -0.0702157508800313)
(   -0.3737373737373737,   -0.0030322353195226)
(   -0.3535353535353535,    0.0641729392362595)
(   -0.3333333333333333,    0.1309125318421854)
(   -0.3131313131313130,    0.1967030289486894)
(   -0.2929292929292928,    0.2610680411614459)
(   -0.2727272727272727,    0.3235416630973647)
(   -0.2525252525252525,    0.3836717750984908)
(   -0.2323232323232323,    0.4410232631655335)
(   -0.2121212121212120,    0.4951811322387158)
(   -0.1919191919191918,    0.5457534877145747)
(   -0.1717171717171716,    0.5923743606501030)
(   -0.1515151515151515,    0.6347063532795852)
(   -0.1313131313131313,    0.6724430830830722)
(   -0.1111111111111110,    0.7053114055551049)
(   -0.0909090909090908,    0.7330733979158092)
(   -0.0707070707070706,    0.7555280882015136)
(   -0.0505050505050504,    0.7725129164126480)
(   -0.0303030303030303,    0.7839049166481181)
(   -0.0101010101010101,    0.7896216113985942)
(    0.0101010101010102,    0.7896216113985971)
(    0.0303030303030305,    0.7839049166481236)
(    0.0505050505050506,    0.7725129164126522)
(    0.0707070707070707,    0.7555280882015131)
(    0.0909090909090911,    0.7330733979158042)
(    0.1111111111111112,    0.7053114055550987)
(    0.1313131313131315,    0.6724430830830670)
(    0.1515151515151516,    0.6347063532795815)
(    0.1717171717171717,    0.5923743606500995)
(    0.1919191919191920,    0.5457534877145704)
(    0.2121212121212122,    0.4951811322387106)
(    0.2323232323232325,    0.4410232631655260)
(    0.2525252525252526,    0.3836717750984806)
(    0.2727272727272729,    0.3235416630973512)
(    0.2929292929292930,    0.2610680411614321)
(    0.3131313131313131,    0.1967030289486769)
(    0.3333333333333335,    0.1309125318421734)
(    0.3535353535353536,    0.0641729392362457)
(    0.3737373737373739,   -0.0030322353195400)
(    0.3939393939393940,   -0.0702157508800495)
(    0.4141414141414144,   -0.1368900552108114)
(    0.4343434343434345,   -0.2025707076667549)
(    0.4545454545454546,   -0.2667798024366154)
(    0.4747474747474749,   -0.3290493927996635)
(    0.4949494949494950,   -0.3889249244711190)
(    0.5151515151515154,   -0.4459686931799249)
(    0.5353535353535355,   -0.4997633473708947)
(    0.5555555555555556,   -0.5499154602218868)
(    0.5757575757575759,   -0.5960591947932745)
(    0.5959595959595960,   -0.6378600808908897)
(    0.6161616161616164,   -0.6750189111229724)
(    0.6363636363636365,   -0.7072757460350096)
(    0.6565656565656568,   -0.7344139940373737)
(    0.6767676767676769,   -0.7562645017429475)
(    0.6969696969696970,   -0.7727095557840864)
(    0.7171717171717173,   -0.7836866605313080)
(    0.7373737373737375,   -0.7891919205486568)
(    0.7575757575757578,   -0.7892828258775670)
(    0.7777777777777779,   -0.7840802164536885)
(    0.7979797979797982,   -0.7737691931669481)
(    0.8181818181818183,   -0.7585987507685498)
(    0.8383838383838385,   -0.7388799344740847)
(    0.8585858585858588,   -0.7149823686999771)
(    0.8787878787878789,   -0.6873290720812790)
(    0.8989898989898992,   -0.6563895549525729)
(    0.9191919191919193,   -0.6226712890924918)
(    0.9393939393939394,   -0.5867097383332890)
(    0.9595959595959598,   -0.5490572350486086)
(    0.9797979797979799,   -0.5102710734924194)
(    1.0000000000000000,   -0.4709012587129276)
    };
    
\addplot[dotted, color=purple,  line width=0.3mm]
    coordinates 
    {
(   -1.0000000000000000,    0.5454555417691086)
(   -0.9797979797979798,    0.5804649918472718)
(   -0.9595959595959596,    0.6123676951414146)
(   -0.9393939393939394,    0.6403614505999484)
(   -0.9191919191919192,    0.6636744578108956)
(   -0.8989898989898990,    0.6815871120914829)
(   -0.8787878787878788,    0.6934525731661582)
(   -0.8585858585858586,    0.6987153408032883)
(   -0.8383838383838383,    0.6969271924540439)
(   -0.8181818181818181,    0.6877599892019967)
(   -0.7979797979797980,    0.6710150245347933)
(   -0.7777777777777778,    0.6466287625439191)
(   -0.7575757575757576,    0.6146749759560027)
(   -0.7373737373737373,    0.5753634396206300)
(   -0.7171717171717171,    0.5290354541218363)
(   -0.6969696969696970,    0.4761565625218733)
(   -0.6767676767676767,    0.4173068794906521)
(   -0.6565656565656566,    0.3531694776683353)
(   -0.6363636363636364,    0.2845172747981369)
(   -0.6161616161616161,    0.2121988422941106)
(   -0.5959595959595959,    0.1371235176496106)
(   -0.5757575757575757,    0.0602461557241420)
(   -0.5555555555555556,   -0.0174481967689255)
(   -0.5353535353535352,   -0.0949594685256197)
(   -0.5151515151515151,   -0.1712873819079395)
(   -0.4949494949494949,   -0.2454457685351137)
(   -0.4747474747474747,   -0.3164762575183508)
(   -0.4545454545454545,   -0.3834612326250570)
(   -0.4343434343434343,   -0.4455359686075933)
(   -0.4141414141414141,   -0.5018998653467434)
(   -0.3939393939393939,   -0.5518267030870199)
(   -0.3737373737373737,   -0.5946738446358937)
(   -0.3535353535353535,   -0.6298903125429818)
(   -0.3333333333333333,   -0.6570236722514027)
(   -0.3131313131313130,   -0.6757256569457970)
(   -0.2929292929292928,   -0.6857564768648641)
(   -0.2727272727272727,   -0.6869877654167277)
(   -0.2525252525252525,   -0.6794041264661596)
(   -0.2323232323232323,   -0.6631032613721359)
(   -0.2121212121212120,   -0.6382946703169768)
(   -0.1919191919191918,   -0.6052969396793390)
(   -0.1717171717171716,   -0.5645336451312880)
(   -0.1515151515151515,   -0.5165279182675505)
(   -0.1313131313131313,   -0.4618957424269342)
(   -0.1111111111111110,   -0.4013380605229607)
(   -0.0909090909090908,   -0.3356317938068455)
(   -0.0707070707070706,   -0.2656198852477528)
(   -0.0505050505050504,   -0.1922004943966614)
(   -0.0303030303030303,   -0.1163154820161474)
(   -0.0101010101010101,   -0.0389383322665745)
(    0.0101010101010102,    0.0389383322666201)
(    0.0303030303030305,    0.1163154820161938)
(    0.0505050505050506,    0.1922004943967077)
(    0.0707070707070707,    0.2656198852477973)
(    0.0909090909090911,    0.3356317938068877)
(    0.1111111111111112,    0.4013380605229998)
(    0.1313131313131315,    0.4618957424269723)
(    0.1515151515151516,    0.5165279182675877)
(    0.1717171717171717,    0.5645336451313250)
(    0.1919191919191920,    0.6052969396793734)
(    0.2121212121212122,    0.6382946703170063)
(    0.2323232323232325,    0.6631032613721579)
(    0.2525252525252526,    0.6794041264661750)
(    0.2727272727272729,    0.6869877654167380)
(    0.2929292929292930,    0.6857564768648716)
(    0.3131313131313131,    0.6757256569458030)
(    0.3333333333333335,    0.6570236722514057)
(    0.3535353535353536,    0.6298903125429823)
(    0.3737373737373739,    0.5946738446358882)
(    0.3939393939393940,    0.5518267030870080)
(    0.4141414141414144,    0.5018998653467254)
(    0.4343434343434345,    0.4455359686075720)
(    0.4545454545454546,    0.3834612326250350)
(    0.4747474747474749,    0.3164762575183296)
(    0.4949494949494950,    0.2454457685350939)
(    0.5151515151515154,    0.1712873819079185)
(    0.5353535353535355,    0.0949594685255967)
(    0.5555555555555556,    0.0174481967689017)
(    0.5757575757575759,   -0.0602461557241660)
(    0.5959595959595960,   -0.1371235176496331)
(    0.6161616161616164,   -0.2121988422941342)
(    0.6363636363636365,   -0.2845172747981612)
(    0.6565656565656568,   -0.3531694776683578)
(    0.6767676767676769,   -0.4173068794906687)
(    0.6969696969696970,   -0.4761565625218883)
(    0.7171717171717173,   -0.5290354541218568)
(    0.7373737373737375,   -0.5753634396206562)
(    0.7575757575757578,   -0.6146749759560256)
(    0.7777777777777779,   -0.6466287625439351)
(    0.7979797979797982,   -0.6710150245348098)
(    0.8181818181818183,   -0.6877599892020170)
(    0.8383838383838385,   -0.6969271924540648)
(    0.8585858585858588,   -0.6987153408033098)
(    0.8787878787878789,   -0.6934525731661735)
(    0.8989898989898992,   -0.6815871120915010)
(    0.9191919191919193,   -0.6636744578109228)
(    0.9393939393939394,   -0.6403614505999573)
(    0.9595959595959598,   -0.6123676951414411)
(    0.9797979797979799,   -0.5804649918472814)
(    1.0000000000000000,   -0.5454555417691372)
    };

    \legend{$i=1$, $i=2$, $i=3$, $i=4$}

\end{axis}

\end{tikzpicture}

\caption{The basis functions $\phi_i$ (see (\ref{350})) of integral operator $\cK: L^2[-1, 1] \to L^2[-1, 1]$ for $i=1,2,3,4$ where $k$ is a squared-exponential kernel with $\ell =0.2$.}
\label{2200}
\end{figure}
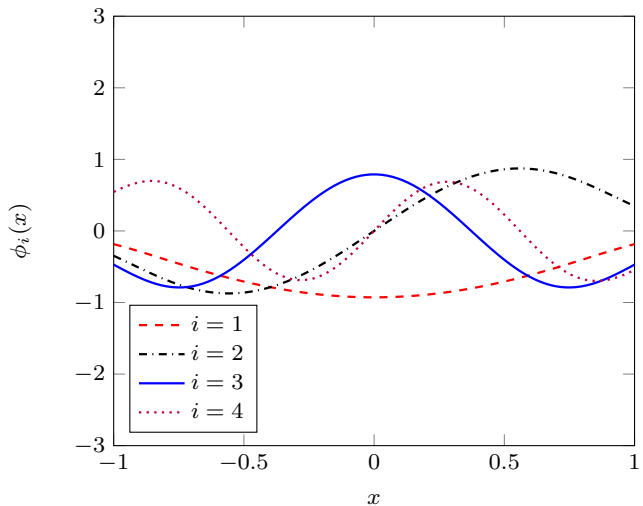

\subsection{Error control}
\label{s120}
Suppose that using Algorithm \ref{a1} with $n$ nodes we construct the 
approximate order-$m$ KL-expansion 
\begin{align}\label{812}
\alpha_1\phi_1(x) + ... + \alpha_m \phi_m(x)
\end{align}
for some $m \leq n$. A natural metric for measuring the error of expansion (\ref{812})
is the $L^2$ difference between the true covariance kernel~$k$ and 
the effective covariance kernel of (\ref{812}). We define this error
to be  $\epsilon_n$. That is, 
\begin{equation}\label{612}
\epsilon_n = \left\Vert k(x, x') - \sum_{i=1}^m \lambda_i u_i(x)u_i(x') \right\Vert_2
\end{equation}
(see Theorem \ref{360}) where $\lambda_i$ and $u_i$
are the eigenvalue and eigenfunction approximations of~(\ref{350}). There are 
two sources of error that contribute to $\epsilon_n$:
\begin{enumerate}
\item 
\textbf{Discretization error}: The eigenvalues and eigenvectors
used in (\ref{812}) are approximated numerically with Algorithm \ref{a1}. 
For kernels that have continuous derivatives of order $j$, 
the convergence of those approximations in $n$, the number of nodes, is 
approximately $O(n^{-(j+1)})$~\cite{yarvin}. 
More precisely, for all fixed $m$, we define $\alpha_n$ by 
\begin{align}\label{832}
\alpha_n = \bigg\| \sum_{i=1}^m \lambda_i u_i(x)u_i(x') - 
\sum_{i=1}^m \lambda_i^n u_i^n(x) u_i^n(x') \bigg\|_2
\end{align}
where $u_i$ and $\lambda_i$ are the exact eigenfunctions and eigenvalues
and $\lambda_i^n$ and $u_i^n$ are the approximations obtained
via Algorithm \ref{a1} with $n$ nodes. Then $\alpha_n = O(n^{-(j+1)})$
independent of $m$.

\item
\textbf{Truncation error}: Suppose that for all $i \leq m$, the eigenvalues and 
eigenfunctions of (\ref{812}) are obtained to infinite precision. Then error 
$\epsilon_n$ of (\ref{612}) becomes
\begin{align}\label{814}
\bigg\| k(x, x') - \sum_{i=1}^m \lambda_i u_i(x)u_i(x') \bigg\|_2 
= \bigg( \sum_{i=m+1}^\infty \lambda_i^2 \bigg)^{1/2}.
\end{align}
Equation (\ref{814}), combined with the $L^2$ optimality of the 
eigenfunction expansion (see (\ref{3101})), shows that for any basis function 
Gaussian process 
regression algorithm, an expansion of length $m$ will have an $L^2$ error of 
at least 
\begin{equation}
\bigg( \sum_{i=m+1}^\infty \lambda_{i+1}^2 \bigg)^{1/2}.
\end{equation}
If the kernel has $j$ times continuous derivatives, then the magnitude 
of $\ell$-th eigenvalue will be approximately 
$O(\ell^{-(j+1)})$. For those kernels, 
\begin{equation}\label{814b}
\begin{split}
\bigg\| k(x, x') - \sum_{i=1}^m \lambda_i u_i(x)u_i(x') \bigg\|_2 
& = \bigg( \sum_{i=m+1}^\infty \lambda_i^2 \bigg)^{1/2} \\
& = O\big(m^{(-4j+1)/2}\big).
\end{split}
\end{equation}

\end{enumerate}
A further discussion of the accuracy of Algorithm \ref{a1} can be found 
in~\cite{yarvin}. In Section \ref{s580} we provide numerical evaluations of 
$\epsilon_n$ of (\ref{612}) for Mat\'ern and squared-exponential kernels. 

For Gaussian processes over $\R^d$, $\epsilon_n$ is an 
integral over a region of $\R^{2d}$ and can be computed with 
adaptive Gaussian quadrature. For $d > 1$, evaluation of these
integrals can be computationally costly. 
For a more tractable alternative to computing 
(\ref{612}) directly, we use
the following measurement of error. 
We first approximate the discretization error by running 
Algorithm \ref{a1} with $n$ nodes. We denote the eigenvalue approximations
\begin{equation}
\lambda_1^n,...,\lambda_m^n.
\end{equation}
We then repeat the same procedure with 
$2n$ nodes and obtain eigenvalue approximations
\begin{equation}
\lambda_1^{2n},...,\lambda_m^{2n}.
\end{equation}
We then check the maximum difference between the $\lambda_{i}^n$ and 
$\lambda_{i}^{2n}$. That is, we evaluate $\delta_{max}$
where 
\begin{equation}
\delta_{max} = \max_{i \leq m} {|\lambda_i^n - \lambda_i^{2n}|}
\end{equation}
The maximum of $\delta_{max}$ and $\lambda_{m+1}$ can be 
used as a proxy for (\ref{832}). The order of magnitude of the 
$L^2$ error $\epsilon_n$ of the approximate KL-expansion can 
therefore be approximated by $\delta_{max} + m^{(-4j+1)/2}$.

We note that for a given level of accuracy, the number of terms needed
to achieve that accuracy depends on the ratio of the size of the 
region where the Gaussian process is defined 
($[a, b]$ in (\ref{780})) and the timescale of 
the kernel. In Figure \ref{2003all} we provide plots of the eigenvalues 
for the squared exponential 
and Mat\'ern kernels in one and two dimensions. 

In Section \ref{s580}, we demonstrate the performance of Algorithm 
\ref{a1} in Gaussian process regression problems in $\R$ and $\R^2$. 
Notably, for commonly-used kernels, the costs of computing KL-expansions 
are negligible compared to the costs of 
performing statistical inference in problems with even moderate 
amounts of data. In Tables \ref{3021} and \ref{3031} we provide the 
accuracy of KL-expansions computed using Algorithm \ref{a1} as a 
function of the number of nodes $n$ (see (\ref{735})) for two commonly 
used covariance kernels -- squared exponential and Mat\'ern. We 
measure accuracy of the KL-expansion when using $n$ nodes as 
the $L^2$ difference between the true kernel and the effective kernel 
of the order-$n$ KL-expansion.

\begin{figure*}
\begin{subfigure}{0.49\linewidth}


\caption{$\log_{10}(\lambda_i)$ where $\lambda_i$ (see (\ref{370})) are 
eigenvalues of the integral operator $\cK: L^2[-1, 1] \to 
L^2[-1, 1]$ where $k$ is a squared-exponential kernel with various $\ell$
for Gaussian processes on $\R$.}
\label{2003}
\end{subfigure}
\caption{Decay of eigenvalues for integral operators with various
  covariance kernels.}
  \label{2003all}
\end{figure*}

\section{Reduced-rank regression}
\label{sec:regression}
Representing a Gaussian process as its KL-expansion has a number of
computational and statistical advantages in problems with large amounts of data. In the canonical Gaussian process regression a user is given
data and noisy observations $\{(x_i,y_i)\}$ and seeks an unknown function $f$
under the model
\begin{equation}
  \begin{aligned}
y\, |\, x & \sim \mathcal{N}(f(x), \sigma^2) \\
f & \sim \mathcal{GP}(0, k(x, x')).
  \end{aligned}
\end{equation}
Using KL-expansions computed via Algorithm \ref{a1} we numerically convert the Gaussian process 
\begin{equation}
f \sim \mathcal{GP}(0, k(x, x'))
\end{equation}
to the KL-expansion 
\begin{equation}\label{730}
f \approx \alpha_1 \phi_1(x) + ... + \alpha_m \phi_m(x)
\end{equation}
where $f$ is defined on some user-specified region, $\alpha_i$ are IID
$\mathcal{N}(0, 1)$ random variables and $\phi_i$ are the scaled
eigenfunctions (\ref{350}). In Figure \ref{2200} we include plots of
the eigenfunctions $\phi_i$ for a squared exponential
kernel~\cite{rasmus} in one dimension.

After converting a Gaussian process to a KL-expansion, regression
tasks involve an additional $O(Nm^2)$ operations where $N$ is the
number of data points and $m$ is the size of the KL-expansion. We now 
describe two methods for statistical inference using KL-expansions 
-- inference at a set of points and a basis function 
approach~\cite{rasmus}.

\subsection{Prediction}

In many applied Gaussian process settings a user is given a set of
noisy measurements $y_i = f(x_i) + \epsilon_i$ for $i=1,...,N$, where
$x_i \in [a, b]$ are independent variables and $\epsilon_i$ is IID
Gaussian noise. After specifying a covariance function, $k$, the goal
 is usually to determine, given $\{(x_i,y_i)\}$, the
conditional (or posterior) distribution at some point 
$\tilde{x}$ or set of points in the region $[a, b]$. 
The conditional distribution of $f(\tilde{x})$
is the Gaussian
\begin{equation}
f(\tilde{x}) \, | \, \vct{x},\vct{y} \sim \mathcal{N}(\tilde{\mu}, \tilde{\sigma}^2)
\end{equation}
where $\vct{x} = (x_1,...,x_N)$, $\vct{y} = (y_1,...,y_N)$, and 
\begin{equation}
  \begin{aligned}
& \tilde{\mu} = k(\tilde{x}, \vct{x}) (\mtx{K} + \sigma^2\mtx{I})^{-1} \vct{y}  \\
& \tilde{\sigma}^2 = k(\tilde{x}, \tilde{x}) - k(\tilde{x}, \vct{x}) (\mtx{K} + \sigma^2\mtx{I})^{-1} k(\vct{x}, \tilde{x}),
  \end{aligned}
\end{equation}
where $\mtx{K}_{i, j} = k(x_i, x_j)$, $k(\tilde{x}, \vct{x})$ is the
row vector 
\begin{align}
[k(\tilde{x}, x_1), ..., k(\tilde{x}, x_N)]
\end{align}
and
$k(\tilde{x}, \vct{x}) = k(\vct{x}, \tilde{x})\tran$. After computing
the KL-expansion of the Gaussian process on $[a, b]$ with covariance
kernel $k$, we $\mtx{K}$ can be approximated via
\begin{equation}
\mtx{K} \approx \mtx{XX\tran} 
\end{equation}
where $\mtx{X}$ is the $N \times m$ matrix with entries
\begin{equation}\label{745}
\mtx{X}_{ij} = \phi_{j}(x_i).
\end{equation}
That is
\[
\mtx{X} = 
\begin{bmatrix}
    \phi_{1}(x_1) & \phi_2(x_1) & \dots  & \phi_{m}(x_1) \\
    \phi_{1}(x_2) & \phi_2(x_2) & \dots & \phi_{m}(x_2) \\
    \vdots & \vdots & \dots & \vdots \\
    \phi_{1}(x_N) & \phi_2(x_N) & \dots & \phi_{m}(x_N)
\end{bmatrix}.
\]
We note that this global low-rank approximation is equivalent to 
approximating each element $\mtx{K}_{i j} = k(x_i, x_j)$ with 
the outerproduct of eigenfunctions
\begin{equation}
\sum_{k=1}^m \phi_k(x_i)\phi_k(x_j) = \sum_{k=1}^m \lambda_k u_k(x_i)u_k(x_j),
\end{equation}
where $\lambda_i$ and $u_i$
are the eigenvalues and eigenfunctions of (\ref{350}).

We can then construct an approximation to 
$(\mtx{K} + \sigma^2 \mtx{I})^{-1}$ in $O(Nm^2)$ operations. 
This can be done by, for example, computing the SVD of $\mtx{X}$:
\begin{equation}\label{437}
\mtx{X} = \mtx{U D V}\tran,
\end{equation}
where $\mtx{U}$ is a $N \times m$ matrix with orthonormal columns, $\mtx{D}$ is a $m \times m$ diagonal matrix, and $\mtx{V}$ is a $m \times m$ orthogonal matrix. We can then use $\mtx{XX\tran}$ as the rank-$m$ approximation to $\mtx{K}$ in order to approximate 
$(\mtx{K} + \sigma^2\mtx{I})^{-1}$ via the following formula
\begin{equation}
\begin{split}
(\mtx{K} + \sigma^2\mtx{I})^{-1} 
 \approx (\mtx{XX\tran} + \sigma^2 \mtx{I})^{-1}
& = (\mtx{UD^2U\tran} + \sigma^2 \mtx{I})^{-1} \\
&  = \mtx{U}(\mtx{D^2} + \sigma^2 \mtx{I})^{-1} \mtx{U\tran}.
\end{split}
\end{equation} 
This method of constructing a global low-rank approximation to the
covariance matrix is also discussed in \cite{solin1}.

\subsection{Weight-space inference}
In addition to facilitating global low rank approximations, 
KL-expansions have the advantage that they allow for statistical 
inference in the coefficients of a basis function expansion
(the weight-space view of \cite{rasmus}).
In fact, from a computational 
standpoint, inference over coefficients is of negligible cost 
once the SVD of $\mtx{X}$ is obtained. 

From a basis function perspective, the standard Gaussian process 
regression model is the canonical $\ell^2$-regularized (ridge) 
linear regression  
\begin{equation}\label{190}
\begin{split}
\vct{y} & \sim \cN(\mtx{X}\vct{\beta}, \sigma^2) \\
\vct {\beta} & \sim \cN(0, \mtx{I})
\end{split}
\end{equation}
where $\mtx{X}$ is the $N \times m$ matrix defined in (\ref{745}). In
this model, we perform inference on $\vct{\beta}$, the coefficients in
the expansion of basis functions $\phi_i(x)$, see (\ref{350}). The
corresponding unnormalized density function is
\begin{equation}\label{323}
q(\vct{\beta}, \sigma) = 
\frac{1}{\big| \frac{1}{\sigma^2} \mtx{X}\tran \mtx{X}
  + \mtx{I} \big|^{1/2}}
\exp \bigg( -\frac{\| \mtx{X} \vct{\beta} - \vct{y} \|^2 }{2\sigma^2} -  \frac{\| \vct{\beta} \| ^2}{2} \bigg),
\end{equation}
which is Gaussian in $\beta$. The expectation (and maximum) of $q$ as of function of 
$\beta$, which we denote $\bar{\beta}$ satisfies
\begin{equation}
\bar{\vct{\beta}} = \argmin_{\vct{\beta}} \| \mtx{X} \vct{\beta} - \vct{y} \|^2 + \sigma^2 \| \vct{\beta} \|^2,
\end{equation}
the ridge regression solution to the linear system~\mbox{$\mtx{X}\vct{\beta}
= \vct{y}$} with complexity parameter $\sigma^2$
\cite{hastie1}. Intuitively, for larger measurement error (larger
$\sigma^2$), the posterior mean function shrinks towards the Gaussian
process mean function, in this case $0$. The maximum
$\bar{\vct{\beta}}$ can be computed as the solution to the $m \times
m$ symmetric, positive semi-definite linear system
\begin{equation}
(\mtx{X\tran} \mtx{X} + \sigma^2 \mtx{I} ) \vct{\beta} = \mtx{X\tran} \vct{y}
\end{equation}
where the inverse of $\mtx{X}\tran\mtx{X} + \sigma^2 \mtx{I}$ can be computed using the SVD of $\mtx{X}$ computed in (\ref{437}) via the identity
\begin{equation}
\begin{split}
(\mtx{X\tran X} + \sigma^2\mtx{I})^{-1} 
& = ((\mtx{UDV\tran)\tran} \mtx{UDV\tran} + \sigma^2 \mtx{I})^{-1} \\
& = \mtx{V\tran}(\mtx{D}^2 + \sigma^2 \mtx{I})^{-1} \mtx{V}
\end{split}
\end{equation}
where $\mtx{D}$ is a diagonal $m \times m$ matrix, $\mtx{V}$ is an orthogonal $m \times m$ matrix and the columns of $\mtx{U}$, a $N \times m$ matrix, are orthonormal.
Furthermore, completing the square of $q$ in (\ref{323}), we obtain
\begin{equation}
  \begin{split}
  q(\vct{\beta}, \sigma) = 
  & \frac{1}{\big| \frac{1}{\sigma^2} \mtx{X}\tran\mtx{X} + \mtx{I} \big|^{1/2}} \\
  & \exp \bigg( -\frac{1}{2}(\vct{\beta} - \bar{\vct{\beta}})\tran
  \bigg(\frac{\mtx{X}\tran \mtx{X}}{\sigma^2} + \mtx{I} \bigg)^{-1} (\vct{\beta} -  \bar{\vct{\beta}}) \bigg)
  \end{split}
\end{equation}
which is the Gaussian 
\begin{equation}
  q(\vct{\beta}, \sigma) \sim \mathcal{N}\bigg(\bar{\vct{\beta}},
  \frac{\mtx{X}\tran \mtx{X}}{\sigma^2} + \mtx{I}\bigg).
\end{equation}
Using standard Gaussian identities, the posterior mean is given by 
\begin{equation}
\vct{\beta\tran} \vct{\phi}
\end{equation}
and the posterior variance satisfies 
\begin{equation}
\vct{\phi\tran} (\frac{1}{\sigma^2} \mtx{X\tran X} + \mtx{I})^{-1} \vct{\phi} 
= \vct{\phi\tran} \mtx{V} \frac{\sigma^2}{\mtx{D}^2 + \sigma^2} \mtx{V\tran} \vct{\phi} 
\end{equation}
where $\vct{\phi} \in \R^m$ is defined by
\begin{equation}
\vct{\phi} = [\phi_1(x) \,\, \phi_2(x) \,\, ... \,\, \phi_m(x) ]\tran.
\end{equation}
In Figure~\ref{2010} we include an illustration of the posterior
mean in weight space for a Gaussian process with 
randomly generated
data in $1$ dimension with Mat\'ern covariance kernel.

\begin{figure}[ht!]
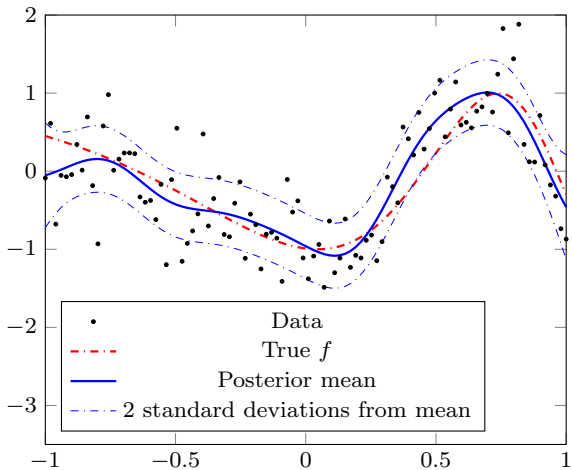

\centering

\caption{
Gaussian process regression with squared exponential kernel with $\ell = 0.2$}
\label{2010}
\end{figure}

\section{Fitting hyperparameters}\label{s550}
In certain applications, hyperparameters of the covariance function are known a priori and are chosen according to, for example, physical properties. For those problems, regression is often performed using the tools and models of the preceding sections. However in many applied environments, hyperparameters of the covariance function are learned from the data. We now describe how using KL-expansions impacts maximum likelihood and Bayesian regression models.

\subsection{Maximum likelihood}

When using KL-expansions for Gaussian processes, the maximum
likelihood approach to hyperparameter estimation involves
finding the maximum of the function
\begin{equation}\label{150}
\begin{split}
q(\vct{\beta}, \vct{\theta}, \sigma) =  
&\frac{1}{\big| \frac{1}{\sigma^2} \mtx{X_{\theta}}\tran\mtx{X_{\theta}} + \mtx{I_k} \big|^{1/2}} \\
& \exp \bigg( -\frac{\| \mtx{X_{\theta}} \vct{\beta} - \vct{y} \|^2 }{2\sigma^2} - \frac{\| \vct{\beta} \|}{2} ^2 \bigg)
\end{split}
\end{equation}
where $\vct{\theta} \in \R^d$ for some $d > 0$ are hyperparameters of
the covariance function, $\mtx{X_{\theta}}$ is the $N \times m$ matrix
(\ref{745}), and $y \in \R^N$ is the data. The entries of
$\mtx{X_{\theta}}$ will depend on the hyperparameters $\vct{\theta}$
and usually involve the recomputation of the KL-expansion via
Algorithm \ref{a1}. However, when compared to other reduced-rank
algorithms, this is not necessarily a computational bottleneck for
two main reasons. First, for problems with large amounts of data,
evaluation of KL-expansions is computationally inexpensive, $O(m^3)$
operations, compared to the cost of solving the linear system
\begin{equation}
\mtx{X_{\theta}} \vct{\beta} = \vct{y},
\end{equation}
which is $O(Nm^2)$ operations. Second, when dealing with families of
covariance kernels where evaluating KL-expansions can be costly,
eigendcompositions can be precomputed for a range of hyperparameter
values. Additionally, the determinant in $q$ can be evaluated in only
$O(m^3)$ operations using classical numerical methods~\cite{stoer}.

\subsection{Bayesian inference}

Fully Bayesian approaches to applied Gaussian process problems are
also common in practice (see, e.g., \cite{rasmus2}). For a wide range
of covariance kernels, the algorithms of this paper can substantially
reduce the oftentimes prohibitive costs of Bayesian inference. Using
Algorithm \ref{a1}, Bayesian inference is reduced to a so-called normal-normal
model with unnormalized posterior density
\begin{equation}\label{450}
\begin{split}
q(\vct{\beta}, \sigma, \vct{\theta}) 
& =  \frac{1}{\big| \frac{1}{\sigma^2} \mtx{X_{\theta}}\tran\mtx{X_{\theta}} + \mtx{I_k} \big|^{1/2}} \\
& \exp \bigg( -\frac{\| \mtx{X}_{\theta} \vct{\beta} - \vct{y} \|^2 }{2\sigma^2} - \frac{\| \vct{\beta} \| ^2}{2} \bigg) p(\vct{\theta}, \sigma)
\end{split}
\end{equation}
where $p(\vct{\theta}, \sigma)$ is some prior on hyperparameters $\vct{\theta}$
and the residual standard deviation $\sigma$. MCMC methods can be used
to sample from the posterior density $q$ via probabilistic programming tools such as 
Stan~\cite{carpenter1}. 
Additionally, since $\vct{\beta}$ in
(\ref{450}) has a Gaussian prior and Gaussian likelihood, $q$
is amenable to efficient numerical methods for inference, particularly
when the number of hyperparameters is small (see, e.g. \cite{greeng1}).

Below we provide an algorithm for the evaluation of moments of 
posteriors (\ref{450}) in which the covariance function depends on two 
parameters -- the amplitude and the timescale -- that are fit from the 
data and priors. We describe the algorithm for the model
\begin{equation}
  \begin{aligned}
y & \sim f(x) + \epsilon \\
f & \sim \mathcal{GP}(0, k(x, x')) \\
\epsilon & \sim \mathcal{N}(0, \sigma^2) 
  \end{aligned}
\end{equation}
where $k$ is a squared exponential kernel 
\begin{align}
k(x,x') = \alpha \exp \bigg( -\frac{(x-x')^2}{2\ell^2}\bigg)
\end{align}
and $\alpha$, $\sigma$, and $\ell$ are given priors
\begin{equation}
  \begin{aligned}
\alpha & \sim \cN^+(0, 3) \\
\sigma & \sim \cN^+(0, 3) \\
\ell & \sim \text{U}(0.02, 1.0).
  \end{aligned}
\end{equation}
where $\cN^+(\mu, \sigma^2)$ denotes the normal distribution $\cN(\mu,
\sigma^2)$ restricted to the non-negative reals.  We note that the
numerical efficiency of the following algorithm does not depend on the
covariance function being in the squared exponential family, nor does
it depend on the particular choices of priors. The algorithm we
describe is a generalization of Algorithm 1 in \cite{greeng1}, which
provides a numerical method for computing posterior moments of
Bayesian linear regression models. Algorithm 1 of~\cite{greeng1} performs
quadrature over a low dimensional space after analytically
marginalizing the regression coefficients.

In the following algorithm we compute posterior moments of a Bayesian
Gaussian process regression model by first discretizing the timescale
hyperparameter $\ell$ with Gaussian nodes. We then use the tools of
\cite{greeng1} to compute the posterior mean and covariance, for each 
$\ell_i$. 

\begin{algorithm0} [Reduced-rank Bayesian inference] \, 
  \label{a2}
  \begin{enumerate}
  \item Construct $n$ Gaussian nodes $\ell_i$ and weights $w_i$ on the
    interval $(0.02, 1.0)$ that will be used to discretize the kernel
    hyperparameter $\ell$.
  \item For each $\ell_i$ compute the KL-expansion corresponding to
    kernel $k$ with timescale $\ell_i$ via Algorithm~\ref{a1} and
    construct matrix $\mtx{X}_\ell$ of (\ref{745}). Note that for all
    $\alpha$, KL-expansions are identical up to a multiplicative
    constant.
 \item
 Use Algorithm 1 of \cite{greeng1} to compute moments of 
 $\vct{\beta}, \vct{\alpha}, \vct{\sigma}$ 
 with respect to density $q(\ell_i, \beta, \alpha, \sigma)$ 
 (see (\ref{450})) where $\ell_i$ is held fixed.
 \item 
 Convert conditional moments of $\vct{\beta}$ from the space of 
 coefficients in a KL-expansion to coefficients of a Legendre expansion. 
 \item
 Use conditional moments ($\ell_i$ fixed) to compute 
 posterior moments of $q$. First moments 
 of the weight space posterior are given in Legendre coefficients by
 \begin{equation}
 \bbE [\vct{c}] = \sum_{i=1}^n w_i \bbE_{\ell_i}[\vct{c}]
 \end{equation}
 where $w_i$ are Gaussian quadrature weights, $\vct{c}_i$ denotes the
 $i^{\text{th}}$ coefficient in a Legendre expansion, and
 $\bbE_{\ell_i}[\vct{c}]$ denotes the expectation of $\vct{c}$ with
 respect to density $q$ conditional on $\ell = \ell_i$.
 \end{enumerate}
\end{algorithm0}
In Section \ref{s580} we provide numerical experiments for Algorithm
\ref{a2} with the Mat\'ern covariance function in one dimension.

\section{Non-smooth covariance kernels}\label{s560}

While many commonly-used kernels are smooth (e.g. the
squared exponential, rational quadratic, and periodic kernels),
others, including Mat\'ern kernels, are not \cite{rasmus}.
The Mat\'ern kernel, $k_{\nu}$ is defined by the formula
\begin{equation}\label{785}
k_{\nu}(r) = \sigma^2 \frac{2^{1-\nu}}{\Gamma(\nu)}
\bigg(  \sqrt{2\nu}\frac{r}{\ell}\bigg)^{\nu}  K_{\nu}\bigg(\sqrt{2\nu}\frac{r}{\ell}\bigg)
\end{equation}
where $\Gamma(\nu)$ is a gamma function and $K_\nu$ is a modified
Bessel function of the second kind. For non-smooth kernels such as the
Mat\'ern, which is $\lfloor \nu \rfloor$ times differentiable at $0$,
convergence rates of the eigendecomposition using Algorithm \ref{a1}
can be slow. For such kernels, we can use another well-known numerical
scheme for computing eigendecompositions. In this scheme, we again
represent the action of the integral operator on a function as a
matrix-vector multiplication. The matrix transforms the Legendre
expansion of an inputted function to the tabulation at Gaussian nodes
of the image of that function under $\cK$. In this algorithm, we
compute elements of the matrix by using a high-order quadrature scheme
that takes advantage of the fact that the kernel is smooth away
from the origin. The eigendecomposition of that matrix is then used to
approximate the eigendecomposition of the corresponding integral
operator.

\begin{algorithm0}[Non-smooth kernels] \,
  \label{a3}
\begin{enumerate}
\item 
Construct the $n \times n$ matrix $\mtx{A}$ defined by 
\begin{equation}\label{490}
\mtx{A}_{i, j} = \sqrt{w_i} \int_{-1}^1 k(x_i, x') \, \overline{P}_{j-1}(x') \, dx' 
\end{equation}
where 
\begin{equation}
x_1,...,x_n
\end{equation}
denote the order-$n$ Legendre nodes
\begin{equation}
w_1,...,w_n
\end{equation}
the order-$n$ Gaussian weights, and $\overline{P}$ the normalized 
Legendre polynomials
(see (\ref{873})). The integral in (\ref{490}) can be
computed by, for example, representing the integral as a sum of two
integrals of smooth functions. That is,
\begin{equation}\label{510}
\begin{split}
\int_{-1}^1 k(x_i, x') \overline{P}_{j}(x') dx' 
& = \int_{-1}^{x_i} k(x_i, x') \overline{P}_{j}(x') dx' \\
& + \int_{x_i}^1 k(x_i, x') \overline{P}_{j}(x') dx'.
\end{split}
\end{equation}
We can then use Gaussian quadrature on each of the two integrals on the right hand side of (\ref{510}).

\item Compute the SVD of $\mtx{A}$. That is, find orthogonal $\mtx{U}, \mtx{V}$,
and diagonal $\mtx{D}$ such that 
\begin{equation}\label{511}
\mtx{A} = \mtx{U} \mtx{D} \mtx{V}\tran.
\end{equation}
We denote the $i$-th entry of the diagonal of $\mtx{D}$ by $\lambda_i$. 
\item Convert the columns of $\mtx{V}$ from a normalized Legendre expansion 
to an ordinary Legendre expansion via 
\begin{equation}
\hat{\mtx{V}}_{i, j} = \mtx{V}_{i, j}/ \sqrt{2 / (2(i-1) + 1)}
\end{equation}

\item Evaluate the eigenfunction approximations $v_i : [-1, 1] \to \R$ by the formula
\begin{equation}\label{342}
v_i(x) = \sum_{j=1}^n \mtx{V}_{j, i} P_{j-1}(x)
\end{equation}
for all $x \in [-1, 1]$ and $i=1,2,...,k$ where $P_j$ denotes the order-$j$ Legendre polynomial. 

\item Scale the eigenfunctions $v_i$ by the square root of the singular values. 
That is, we define $\phi_i$ by
\begin{equation}\label{352}
\phi_i(x) = \sqrt{\lambda_i} v_i(x).
\end{equation}

\item The KL-expansion of length $m \leq n$ is given by
\begin{equation}\label{723}
\hat{f}(x) = \alpha_1 \phi_1(x) + \alpha_2\phi_2(x) + ... + \alpha_m \phi_m(x)
\end{equation}
for all $x \in [-1, 1]$ where $\alpha_i \sim \cN(0, 1)$ are IID Gaussian random variables. 

\end{enumerate}
\end{algorithm0}

The convergence of this algorithm is super-algebraic for all kernels
$k = k(x, y)$ that are smooth away from $x=y$
(see~\cite{yarvin}). Specifically, for fixed $m$, the discretization
error $\alpha_n$ of (\ref{832}) decays faster than $O(1/n^{j})$ for
any $j$ (see~\cite{trefethen}).  Figure~\ref{2127all}
illustrates Algorithm \ref{a3}'s superior convergence compared to
Algorithm \ref{a1} in the approximation of eigenvalues for two
Mat\'ern kernels.  Aside from convergence rates, the description of
error control in Section \ref{s531} for KL-expansions generated using
Algorithm \ref{a1} applies in the same sense to this algorithm.

Despite Algorithm \ref{a3} possessing superior convergence 
properties than Algorithm \ref{a1}, for many practical problems there is 
little difference between the algorithms, even when the kernel is non-smooth at $0$.
Specifically, $\epsilon_n$, the $L^2$ error defined in (\ref{612}), 
has similar decay properties for non-smooth kernels when constructing
the KL-expansions via Algorithm \ref{a1} and Algorithm \ref{a3}. 
Figure~\ref{2123all} demonstrates this decay for two Mat\'ern kernels. 
The similar decay properties of these two algorithms for non-smooth kernels
is due to the fact that error $\epsilon_n$ is dominated by 
truncation error, not discretization error.


\section{Generalizations to higher dimensions}\label{s570}

Thus far we have considered only Gaussian processes over one
dimension, however in this section we focus on real-valued Gaussian 
processes over $\R^d$ for $d > 1$. For $d=2$ and $d = 3$,
applications include spatial and spatio-temporal problems~\cite{stein1, datta1}. 
Nearly all of the analytical and numerical tools described thus far for computing 
with Gaussian processes in one dimension extend naturally to higher dimensions. 
In particular, the Karhunen-Lo\`eve theorem (Theorem \ref{410}) and Algorithm 
\ref{a1} are nearly identical in $\R^d$. 

The extension of Algorithm \ref{a1} to higher dimension relies on the
discretization of functions in $\R^d$ via a tensor product of Gaussian
nodes. For the remainder of this section, we describe a numerical
algorithm for computing KL-expansions for Gaussian processes in two
dimensions.  That is, we compute eigenfunctions and eigenvalues of the
integral operator $\cK$ defined by
\begin{equation}\label{400}
\cK \mu(x) = \int_D k(x, x') \, \mu(x') \, dx'
\end{equation}
where $(x, x') \in D \times D$, and $D$ is a rectangular region in $\R^2$. 

In the two-dimensional Karhunen-Lo\`eve expansion, we represent eigenfunctions of the integral operator $\cK$ using an expansion in a tensor product of Legendre polynomials. The eigenfunction $\phi: D \to \R$ is represented as
\begin{equation}
\phi(x, y) = \sum_{i}\sum_{j} c_{ij} P_i(x)P_j(y)
\end{equation}
for all $(x, y) \in D$ where $c_{ij}$ are some real numbers. The algorithm we use for computing the eigendecomposition of integral operator $\cK$ in (\ref{400}) relies on discretizing the integral operator $\cK$ as a matrix that maps a function tabulated at two-dimensional Gaussian nodes to another function tabulated at Gaussian nodes.

\begin{algorithm0} [Eigenfunctions in two dimensions] \,
  \label{a4}
\begin{enumerate}
\item 
Construct the $n^2 \times n^2$ matrix $\mtx{A}$ in which each row and column corresponds to a point $[x_i, x_j] \in \R^2$ where $x_i$ and $x_j$ are Gaussian nodes. That is, 
\begin{equation}
\begin{split}
\mtx{A}&_{i + (n-1)j,  k + (n-1)l} = \\
& k([x_i, x_j]\tran, [x_k, x_l]\tran)\,  \sqrt{w_i w_j}\sqrt{w_k w_l}
\end{split}
\end{equation}
where 
\begin{equation}
x_1,...,x_n
\end{equation}
denote the order-$n$ Gaussian nodes
\begin{equation}
w_1,...,w_n
\end{equation}
the order-$n$ Gaussian weights.
\item Compute the diagonal form of the symmetric matrix $\mtx{A}$. That is, find the orthogonal matrix $\mtx{U}$ and the diagonal matrix $\mtx{D}$ such that
\begin{equation}\label{371}
\mtx{A} = \mtx{UDU\tran}.
\end{equation}
We denote the $i$th entry of the diagonal of $\mtx{D}$ by $\lambda_i$. 
\item Construct the $n^2 \times n^2$ matrix $\hat{\mtx{U}} = [u_i] $ defined by 
\begin{equation}
\hat{\mtx{U}}_{i + (n-1)j, k} = \mtx{U}_{i+(n-1)j, k}/\sqrt{w_i w_j}.
\end{equation}

\item Each column of $\hat{\mtx{U}}$ is a vector in $\R^{n^2}$
  denoting tabulations of an eigenfunction at the $n \times n$ tensor
  product of Gaussian nodes. We then recover the Legendre expansion in
  a tensor product of Legendre polynomials that corresponds to that
  eigenfunction. We do this by first converting the column vector
  $\hat{\mtx{U}}_i$ to an $n \times n$ matrix, $\mtx{V}_i$ and then
  evaluating the matrix $\mtx{A}_i$ of expansions coefficients defined by
\begin{equation}
\mtx{A}_i = \mtx{M} \mtx{V}_i \mtx{M}\tran
\end{equation}
where $\mtx{M}$ is the matrix of Theorem \ref{230} that maps a function tabulated
at Legendre nodes to an expansion in Legendre polynomials. 

\item $\mtx{A}_i$ is the two-dimensional eigenfunction expansion of
  the $i$-th eigenfunction of $\cK$. We evaluate the eigenfunction
  $u_i : D \to \R$ by the formula
\begin{equation}\label{341}
u_l(x, y) = \sum_{i=1}^n \sum_{j=1}^n \mtx{A}_{l_{i, j}} P_{i-1}(x)P_{j-1}(y)
\end{equation}
for all $(x, y) \in D$ and $l=1,2,...,n^2$ where $P_j$ denotes the order-$j$ Legendre polynomial. 

\item Scale the eigenfunctions $u_i$ by the square root of the eigenvalues to obtain
$\phi_i$, which we define by
\begin{equation}\label{351}
\phi_i(x, y) = \sqrt{\lambda_i} u_i(x, y)
\end{equation}
where $u_i$ is defined in (\ref{341}). 

\item The KL-expansion of length $m \leq n^2$ is given by
\begin{equation}\label{622b}
\hat{f}(x, y) = \alpha_1 \phi_1(x,y) + \alpha_2\phi_2(x,y) + ... + \alpha_m \phi_m(x,y)
\end{equation}
for all $(x, y) \in D$ where $\alpha_i \sim \cN(0, 1)$ are IID Gaussian random variables. 
\end{enumerate}

\end{algorithm0}

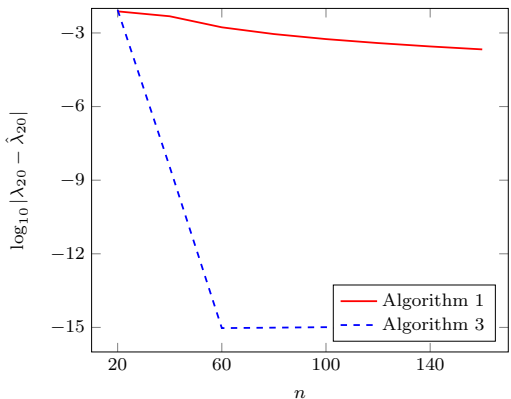
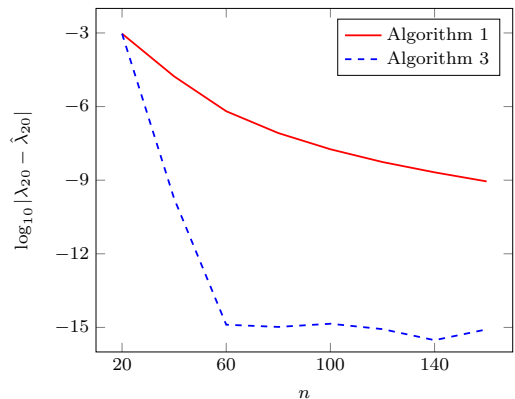
\begin{figure*}[t!]
  \centering
  \begin{subfigure}{.48\linewidth}
\begin{tikzpicture}[scale=0.8]
\centering
\begin{axis}[
    xmin=10, xmax=170,
    ymin=-16, ymax=-2.0,
    xtick={20, 60, 100, 140},
    xlabel=$n$,
    ytick={-15, -12, -9, -6, -3},
    ylabel=$\log_{10} | \lambda_{20} - \hat{\lambda}_{20} |$,
    legend pos= south east
]

\addplot[color=red,  line width=0.3mm]
    coordinates 
    {
(    20.0000000000000000,   -2.12)
(    40.0000000000000000,   -2.32)
(    60.0000000000000000,   -2.77)
(    80.0000000000000000,   -3.045)
(    100.0000000000000000,   -3.25)
(    120.0000000000000000,   -3.412)
(    140.0000000000000000,   -3.55)
(    160.0000000000000000,   -3.67)
};
    
\addplot[dashed, color=blue,  line width=0.3mm]
    coordinates 
    {
(    20.0000000000000000,   -2.06)
(    40.0000000000000000,   -8.46)
(    60.0000000000000000,   -15.028)
(    80.0000000000000000,   -15.01)
(    100.0000000000000000,   -14.99)
(    120.0000000000000000,   -15.22)
(    140.0000000000000000,   -15.43)
(    160.0000000000000000,   -15.52)
};

\legend{Algorithm \ref{a1}, Algorithm \ref{a3}}
\end{axis}
\end{tikzpicture}

\caption{Mat\'ern kernel with $\nu = 1/2, \ell = 0.2$}\label{2126}
  \end{subfigure}
  \quad
  \begin{subfigure}{.48\linewidth}
\centering
\begin{tikzpicture}[scale=0.8]
\centering
\begin{axis}[
    xmin=10, xmax=170,
    ymin=-16, ymax=-2.0,
    xtick={20, 60, 100, 140},
    xlabel=$n$,
    ytick={-15, -12, -9, -6, -3},
    ylabel=$\log_{10} | \lambda_{20} - \hat{\lambda}_{20} |$,
    legend pos=north east
]

\addplot[color=red,  line width=0.3mm]
    coordinates 
    {
(    20.0000000000000000,   -3.03)
(    40.0000000000000000,   -4.77)
(    60.0000000000000000,   -6.19)
(    80.0000000000000000,   -7.08)
(    100.0000000000000000,   -7.74)
(    120.0000000000000000,   -8.26)
(    140.0000000000000000,   -8.68)
(    160.0000000000000000,   -9.05)
};
    
\addplot[dashed, color=blue,  line width=0.3mm]
    coordinates 
    {
(    20.0000000000000000,   -3.03)
(    40.0000000000000000,   -9.75)
(    60.0000000000000000,   -14.89)
(    80.0000000000000000,   -14.98)
(    100.0000000000000000,   -14.85)
(    120.0000000000000000,   -15.07)
(    140.0000000000000000,   -15.52)
(    160.0000000000000000,   -15.08)
};
\legend{Algorithm \ref{a1}, Algorithm \ref{a3}}
\end{axis}
\end{tikzpicture}

\caption{Mat\'ern kernel with $\nu = 5/2, \ell = 0.2$}
\label{2127}
  \end{subfigure}
  \caption{Error of approximation of $\lambda_{20}$ 
when using $n$ nodes for both Algorithm \ref{a1} and Algorithm \ref{a3}.
The covariance kernel is a Mat\'ern kernel 
with $\nu = 1/2$ and $\nu = 5/2$, with $\ell = 0.2$ and the KL-expansions are of length $n$ and are defined on $[-1, 1]$}
  \label{2127all}
\end{figure*}

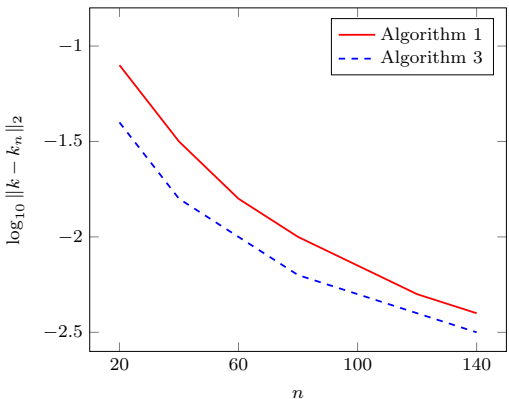
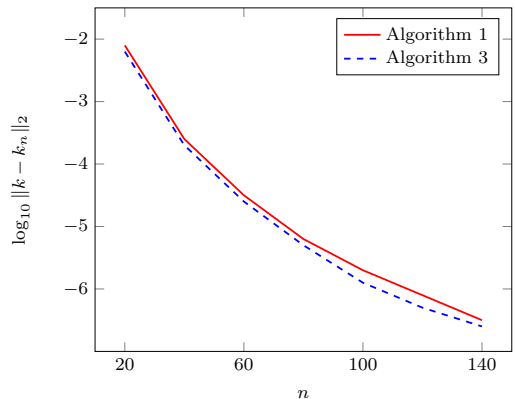
\begin{figure*}[t]
  \centering
  \begin{subfigure}{.48\linewidth}
\begin{tikzpicture}[scale=0.8]
\centering
\begin{axis}[
    xmin=10, xmax=150,
    ymin=-2.6, ymax=-0.8,
    xtick={20, 60, 100, 140},
    xlabel=$n$,
    ytick={-3,-2.5, -2, -1.5, -1, 0},
    ylabel=$\log_{10}\| k - k_n \|_2$,
    legend pos=north east
]

\addplot[color=red,  line width=0.3mm]
    coordinates 
    {
(    20.0000000000000000,   -1.1)
(    40.0000000000000000,   -1.5)
(    60.0000000000000000,   -1.8)
(    80.0000000000000000,   -2.0)
(    100.0000000000000000,   -2.15)
(    120.0000000000000000,   -2.3)
(    140.0000000000000000,   -2.4)
};
    
\addplot[dashed, color=blue,  line width=0.3mm]
    coordinates 
    {
(    20.0000000000000000,   -1.4)
(    40.0000000000000000,   -1.8)
(    60.0000000000000000,   -2.0)
(    80.0000000000000000,   -2.2)
(    100.0000000000000000,   -2.3)
(    120.0000000000000000,   -2.4)
(    140.0000000000000000,   -2.5)
};
\legend{Algorithm \ref{a1}, Algorithm \ref{a3}}
\end{axis}
\end{tikzpicture}

\caption{Mat\'ern kernel with $\nu = 1/2$, $\ell = 0.2$}
\label{2122}
  \end{subfigure}
  \quad
  \begin{subfigure}{.48\linewidth}
\centering
\begin{tikzpicture}[scale=0.8]
\centering
\begin{axis}[
    xmin=10, xmax=150,
    ymin=-7.0, ymax=-1.5,
    xtick={20, 60, 100, 140},
    xlabel=$n$,
    ytick={-6, -5, -4, -3,-2,-1,0},
    ylabel=$\log_{10} \| k - k_n \|_2$,
    legend pos=north east
]

\addplot[color=red,  line width=0.3mm]
    coordinates 
    {
(    20.0000000000000000,   -2.1)
(    40.0000000000000000,   -3.6)
(    60.0000000000000000,   -4.5)
(    80.0000000000000000,   -5.2)
(    100.0000000000000000,   -5.7)
(    120.0000000000000000,   -6.1)
(    140.0000000000000000,   -6.5)
};

\addplot[dashed, color=blue,  line width=0.3mm]
    coordinates 
    {
(    20.0000000000000000,   -2.2)
(    40.0000000000000000,   -3.7)
(    60.0000000000000000,   -4.6)
(    80.0000000000000000,   -5.3)
(    100.0000000000000000,   -5.9)
(    120.0000000000000000,   -6.3)
(    140.0000000000000000,   -6.6)
    };

\legend{Algorithm \ref{a1}, Algorithm \ref{a3}}
\end{axis}
\end{tikzpicture}

\caption{Mat\'ern kernel with $\nu = 5/2$, $\ell = 0.2$}
\label{2123}
  \end{subfigure}
  \caption{$L^2$ error (see (\ref{612})) of the effective covariance 
kernel when using $n$ nodes and KL-expansions of length $n$ for both 
Algorithm \ref{a1} and Algorithm \ref{a3}. The true covariance kernel is a Mat\'ern kernel with $\ell = 0.2$ and $\nu = 1/2, 5/2$, where expansions are defined on $[-1, 1]$.}
  \label{2123all}
\end{figure*}

The error control described in Section~\ref{s120} applies exactly to
this algorithm as well. We note that the class of algorithms described
in this paper suffers from the curse of dimensionality and the cost of
discretization of real-valued functions defined on $\R^d$ scales like
$m^d$ where $m$ is the number of discretization nodes in each
direction. Despite computational intractability in high dimensions,
eigendecompositions of operators over two and three dimensions are
still amenable to the algorithms of this paper. In the following
section we describe numerical experiments using the algorithms of this
paper for Gaussian processes over $\R$ and $\R^2$.

\section{Numerical experiments}
\label{s580}

We demonstrate the performance of the algorithms of this paper with
numerical experiments. The algorithms were implemented in Fortran and
we used the GFortran compiler on a 2.6 GHz 6-Core Intel Core i7
MacBook Pro. All examples were run in double precision arithmetic.

In this section, we focus on accuracy as measured by how well 
the true covariance kernel is approximated by the effective kernel 
implied by the KL-expansion. Under this framework, Gaussian process 
regression can be thought of as exact regression using a kernel that 
approximates to high accuracy the true kernel. 

In subsequent work, we will focus on the relationship between the 
accuracy of the effective covariance kernel and the accuracy of the approximate posterior distribution.

\subsection{Gaussian processes on the interval}
We demonstrate the performance of Algorithm \ref{a1} on randomly 
generated data on the interval \mbox{$[-1, 1] \subset \R$}. The data was 
generated according to 
\begin{equation}\label{810}
y_i = \cos(3e^{x_i}) + \epsilon_i
\end{equation}
where $x_i$ are equispaced points on $[-1, 1]$ and $\epsilon_i$ are IID 
Gaussian noise. For these experiments we used two covariance functions 
-- the squared-exponential  
\begin{equation}\label{820}
k(x,x') = \exp\bigg( -\frac{(x-x')^2}{2\ell^2}\bigg)
\end{equation}
and the Mat\'ern kernel (\ref{785}) with $\nu = 3/2$, which satisfies the identity
\begin{equation}\label{825}
k_{3/2}(r) = \bigg(1 + \frac{\sqrt{3}r}{\ell}\bigg) \exp\bigg(-\frac{\sqrt{3}r}{\ell}\bigg).
\end{equation}

In Tables \ref{3021} and \ref{3031} we demonstrate the time and accuracy
of Algorithm \ref{a1} in evaluating KL-expansions for a squared exponential
and a Mat\'ern kernel as a function of the number of discretization nodes 
used (see (\ref{790})).  The accuracy of the expansion is measured in the $L^2$ 
sense -- the columns labelled $\| k - k_n \|_2$ show the quantity
\begin{equation}\label{800}
\| k - k_n \|_2 = \Big( \int_{-1}^1 \int_{-1}^1 (k(x, y) - k_n(x, y))^2 dx dy \Big)^{1/2}
\end{equation}
where $k_n$ is the effective covariance function of the numerically 
computed order-$n$ KL-expansion using $n$ nodes and $k$ is the 
exact covariance function. Integral (\ref{800}) was computed using 
adaptive Gaussian quadrature. 

In Table \ref{3041}, we demonstrate numerical experiments of the implementation
of Algorithm \ref{a2} on the data described in (\ref{810}). Algorithm \ref{a2} 
computes posterior moments of the fully Bayesian Gaussian process model 
\begin{equation}
  \begin{aligned}
y & \sim \cN(\mtx{X}_{\ell}\vct{\beta}, \sigma^2) \\
\vct{\beta} & \sim \cN(0, \alpha \mtx{I}) \\
\alpha & \sim \cN^+(0, 3) \\
\sigma & \sim \cN^+(0, 3) \\
\ell & \sim \text{U}(0.02. 1.0)
  \end{aligned}
\end{equation}
with the $k_{3/2}$ Mat\'ern covariance function. Note that $\alpha$ 
corresponds to the magnitude of the covariance kernel, $\sigma$ the 
residual standard deviation and $\ell$ the timescale. 

In Table \ref{3041}, column $N$ corresponds to the number of data
points used, $n$ is the number of discretization points in computing
the KL-expansion (see step 2 of algorithm \ref{a2}). The column
labeled ``accuracy" denotes the maximum absolute error of posterior
expectations computed using Algorithm \ref{a2}. 
That is, ``accuracy" reports the quantity
\begin{equation}
\max \{\| \vct{\beta} - \vct{\hat{\beta}} \|_{\infty}, | \alpha - \hat{\alpha} |, | \ell - \hat{\ell} |, | \sigma - \hat{\sigma} |
\}
\end{equation}
where $\vct{\beta}$ is the true posterior mean and $\vct{\hat{\beta}}$ denotes 
the approximation using Algorithm \ref{a2}. Similarly, $\hat{\alpha}, \hat{\ell}, 
\hat{\sigma}$ denote the approximations to the exact parameter values 
$\alpha, \ell, \sigma$.
The accuracy reported depends on the number of nodes used in the 
quadrature and the smoothness of 
the posterior densities being integrated.

In Figure \ref{78abc} we report the accuracy of the posterior mean and standard
deviation for Gaussian process regression using the data-generating process
of (\ref{810}) with $N=100$ data points. We compute the ground truth using 
a dense $O(N^3)$ algorithm. $L^2$ errors were computed by tabulating posterior
means and standard deviations at $200$ equispaced nodes on $[-1, 1]$.

\begin{table*}
\parbox{.45\linewidth}{
\centering
\begin{tabular}{ccc}
   $n$ & time (ms)  & $ \| k - k_n \|_2$ \\
  \hline
  $5$   & $0.01$ & $0.40 \times 10^{0}$ \\
  $10$ & $0.02$ & $0.66 \times 10^{-1}$ \\
  $15$ & $0.04$ & $0.56 \times 10^{-2}$ \\
  $20$ & $0.08$ & $0.25 \times 10^{-3}$ \\
  $25$ & $0.13$ & $0.71 \times 10^{-5}$ \\
  $30$ & $0.20$ & $0.13 \times 10^{-6}$ \\
  $35$ & $0.28$ & $0.17 \times 10^{-8}$ \\
  $40$ & $0.36$ & $0.17 \times 10^{-10}$ \\
  $45$ & $0.44$ & $0.12 \times 10^{-12}$ \\
  $50$ & $0.54$ & $0.11 \times 10^{-13}$ 
\end{tabular}
\caption{\em KL-expansion accuracy and computation times for a one-dimensional Gaussian process with squared exponential kernel, $\ell = 0.2$, using Algorithm \ref{a1}
with $n$ nodes and an order-$n$ KL-expansion.}
\label{3021}
}
\hfill
\parbox{.45\linewidth}{
\centering
\begin{tabular}{ccc}
   $n$ & time (ms)  & $ \| k - k_n \|_2$ \\
  \hline
  $10$ & $0.03$ & $0.12 \times 10^{0}$ \\
  $15$ & $0.06$ & $0.43 \times 10^{-1}$ \\
  $20$ & $0.10$ & $0.18 \times 10^{-1}$ \\
  $25$ & $0.14$ & $0.89 \times 10^{-2}$ \\
  $30$ & $0.21$ & $0.49 \times 10^{-2}$ \\
  $35$ & $0.31$ & $0.29 \times 10^{-2}$ \\
  $40$ & $0.42$ & $0.18 \times 10^{-2}$ \\
  $45$ & $0.63$ & $0.12 \times 10^{-2}$ \\
  $50$ & $0.68$ & $0.86 \times 10^{-3}$ \\
  $55$ & $0.85$ & $0.62 \times 10^{-3}$ 
 \end{tabular}
\caption{\em KL-expansion accuracy and computation times for a one-dimensional Gaussian process with Mat\'ern kernel,  $\ell = 0.2$, using Algorithm \ref{a1}
with $n$ nodes and an order-$n$ KL-expansion.}
\label{3031}
}

\vspace{0.5cm}

\centering
\begin{tabular}{cccc}
   $N$ & $n$ & accuracy  & total time (s) \\
  \hline
  $10$ & $140$ & $0.11 \times 10^{-3}$ & 0.21 \\
  $100$ & $140$ & $0.52 \times 10^{-3}$ & 0.25 \\
  $1,000$ & $140$ & $0.10 \times 10^{-2}$ & 0.63 \\
  $10,000$ & $140$ & $0.15 \times 10^{-2}$ & 2.03 \\
  $100,000$ & $100$ & $0.33 \times 10^{-2}$ & 37.2 
\end{tabular}
\caption{\em Accuracy and timings of Bayesian inference using 
Algorithm \ref{a2} with $n$ nodes and KL-expansions of order $n$
with fitting of residual variance, timescale, and magnitude with Mat\'ern 3/2 kernel.}
\label{3041}
\end{table*}

\begin{table*}
\centering
\begin{tabular}{ccc}
   $n$ & total time (s)  & $ \| k - k_n \|_2$ \\
  \hline
  $10^2$ &  $0.004$ & $0.033 \times 10^{0}$ \\
  $12^2$ &  $0.008$ & $0.93 \times 10^{-2}$ \\
  $15^2$ &  $0.02$ & $0.11 \times 10^{-2}$ \\
  $17^2$ &  $0.04$ & $0.2 \times 10^{-3}$ \\
  $20^2$ &  $0.21$ & $0.49 \times 10^{-4}$ 
\end{tabular}

  \caption{\em KL-expansion accuracy and computation times for a 
  two-dimensional Gaussian process with squared exponential kernel 
  and $\ell = 0.25$ using Algorithm \ref{a4} with $n$ nodes
  and order-$n$ KL-expansions.}
\label{3001}
\vspace{0.5cm}

\centering
\begin{tabular}{ccccccc}
   $N$ & $n$ & $ \| k - k_n \|_2$ & KL time (s)  & regression time (s) & total time (s)\\
  \hline
  $2,500$ & $20^2$ &  $0.5 \times 10^{-4}$ & 0.05 & 0.15 & 0.20 \\
  $6,400$ & $20^2$ &  $0.5 \times 10^{-4}$ & 0.05 & 0.30 & 0.35 \\
  $10,000$ & $20^2$ &  $0.5 \times 10^{-4}$ & 0.05 & 0.51 & 0.56 \\
  $90,000$ & $20^2$ & $0.5 \times 10^{-4}$ & 0.05 & 2.38 & 2.43 \\
  $160,000$ & $20^2$ & $0.5 \times 10^{-4}$ & 0.05 & 7.07 & 7.12 
\end{tabular}

 \caption{\em KL-expansion accuracy and regression compute times 
 for two-dimensional Gaussian process regression with $N$ data points on the unit 
 square with squared exponential kernel (\ref{610}) and $\ell = 0.25$. We 
 used Algorithm \ref{a2} with an order-$n$ KL-expansions
 and accuracy is measured as $\|k - k_n\|_2$ where $k_{n}$ is the effective 
 covariance kernel.}
\label{3011}
\end{table*}

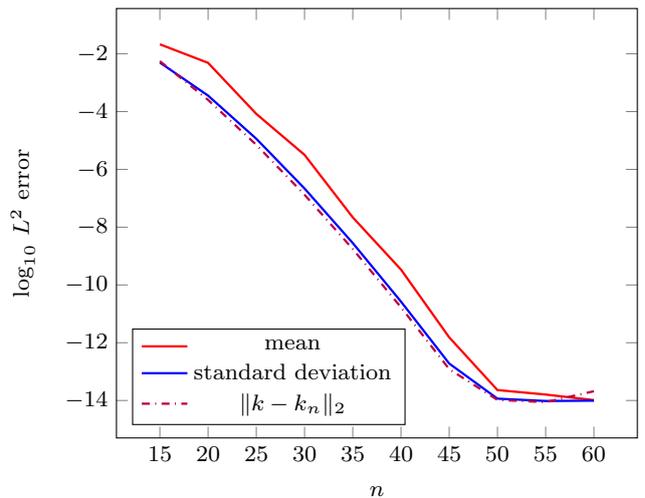
\begin{figure}[t!]
\begin{tikzpicture}
  \centering
\begin{axis}[
    xtick={15, 20, 25, 30, 35, 40, 45, 50, 55, 60},
    xlabel=$n$,
    ytick={-14, -12, -10, -8, -6, -4, -2, 0},
    ylabel=$\log_{10}$ $L^2$ error,
    legend pos=south west
]

\addplot[color=red,  line width=0.3mm]
    coordinates 
    {
(   15.0000000000000000,   -1.6707521468830315)
(   20.0000000000000000,   -2.3130802215376129)
(   25.0000000000000000,   -4.0793685436063809)
(   30.0000000000000000,   -5.4978359139345292)
(   35.0000000000000000,   -7.6625110487254338)
(   40.0000000000000000,   -9.4691067314842421)
(   45.0000000000000000,  -11.8150697125802608)
(   50.0000000000000000,  -13.6391234430281099)
(   55.0000000000000000,  -13.7944467474353374)
(   60.0000000000000000,  -13.9869584825776041)

};
    
\addplot[color=blue,  line width=0.3mm]
    coordinates 
    {
(   15.0000000000000000,   -2.3033265245125656)
(   20.0000000000000000,   -3.4466989218576360)
(   25.0000000000000000,   -4.9408777177235956)
(   30.0000000000000000,   -6.6623070788755898)
(   35.0000000000000000,   -8.5514206978709186)
(   40.0000000000000000,  -10.5777897764222093)
(   45.0000000000000000,  -12.7206571892954923)
(   50.0000000000000000,  -13.9365896824701352)
(   55.0000000000000000,  -14.0171512918112544)
(   60.0000000000000000,  -14.0064607406175430)
};

\addplot[color=purple,  dash dot, line width=0.3mm]
    coordinates 
    {
(   15.0000000000000000,   -2.2532726450051515)
(   20.0000000000000000,   -3.5959109096746613)
(   25.0000000000000000,   -5.1511217498738047)
(   30.0000000000000000,   -6.8785947381594879)
(   35.0000000000000000,   -8.7582542930965932)
(   40.0000000000000000,  -10.7752579210933206)
(   45.0000000000000000,  -12.9139915224545856)
(   50.0000000000000000,  -13.9801704427381477)
(   55.0000000000000000,  -14.0575794673647962)
(   60.0000000000000000,  -13.6792095404350142)
};

\legend{mean, standard deviation, $\|k - k_n\|_2$}
\end{axis}
\end{tikzpicture}
\caption{$L^2$ (RMS) error of Gaussian process posterior mean and standard deviation for various numbers of basis functions with squared exponential kernel ($\ell = 0.2$). Algorithm \ref{a1} was used to construct the $n$ basis functions with $n$ discretization nodes. Data was generated via $y_i = \cos(3e^{x_i}) + \epsilon_i$ for $N = 100$ uniformly distributed $x_i$ on $[-1, 1]$ and $\epsilon_i$ iid Gaussian noise. $\| k - k_n \|_2$ denotes the $L^2$ accuracy of the effective kernel (see (\ref{800})). }
\label{78abc}
  \end{figure}

\begin{figure*}[t!]
  \centering
  \begin{subfigure}{.3\linewidth}
\begin{tikzpicture}[scale=0.6]
\centering
\begin{axis}[
    xmin=8, xmax=32,
    ymin=-14, ymax=0.0,
    xtick={5, 10, 15, 20, 25, 30, 35},
    xlabel=$m$,
    ytick={-14, -12, -10, -8, -6, -4, -2, 0},
    ylabel=$\log_{10} \| \vct{\bar{y}}_{\text{true}} - \vct{\bar{y}}_{\text{approx}} \|_{\infty}$,
    legend pos=south west
]

\addplot[color=red,  line width=0.3mm]
    coordinates 
    {
(    10.0,   -1.7887)
(    15.0,   -1.71)
(    20.0,   -1.71)
(    25.0,   -1.71)
(    30.0,   -1.71)
};
    
\addplot[color=blue,  line width=0.3mm]
    coordinates 
    {
(    10.0,    -1.9505)
(    15.0,   -4.7987)
(    20.0,   -7.6986)
(    25.0,   -11.7084)
(    30.0,  -13.5346)
};
\legend{Hilbert-GP, Algorithm \ref{a1}}
\end{axis}
\end{tikzpicture}

\caption{$\ell = 0.25$}
\label{78a}
  \end{subfigure}
  \quad
  \begin{subfigure}{.3\linewidth}
\begin{tikzpicture}[scale=0.6]
\centering
\begin{axis}[
    xmin=8, xmax=42,
    ymin=-14, ymax=0.0,
    xtick={5, 10, 15, 20, 25, 30, 35, 40},
    xlabel=$m$,
    ytick={-14, -12, -10, -8, -6, -4, -2, 0},
    legend pos=south west
]
    
\addplot[color=red,  line width=0.3mm]
    coordinates 
    {
(    10.0,     -1.6355)
(    15.0,   -2.1874)
(    20.0,    -2.3841)
(    25.0,    -2.3841)
(    30.0,    -2.3841)
(    35.0,    -2.3841)
(    40.0,    -2.3841)
};
\addplot[color=blue,  line width=0.3mm]
    coordinates 
    {
(    10.0,   -1.3930)
(    15.0,   -3.0205)
(    20.0,   -5.4771)
(    25.0,  -8.5986)
(    30.0,   -12.2042)
(    35.0,   -13.1793)
(    40.0,   -13.0614)
};
\legend{Hilbert-GP, Algorithm \ref{a1}}
\end{axis}
\end{tikzpicture}

\caption{$\ell = 0.2$}
\label{78b}
  \end{subfigure}  
  \begin{subfigure}{.3\linewidth}
\centering
\begin{tikzpicture}[scale=0.6]
\centering
\begin{axis}[
    xmin=8, xmax=72,
    ymin=-14, ymax=0.0,
    xtick={10, 20, 30, 40, 50, 60, 70},
    xlabel=$m$,
    ytick={-14, -12, -10, -8, -6, -4, -2, 0},
    legend pos=south west
]

\addplot[color=red,  line width=0.3mm]
    coordinates 
    {
(    10.0,    -0.7930)
(    15.0,   -0.7994)
(    20.0,   -1.0465)
(    25.0,   -1.9570)
(    30.0,   -2.8126)
(    35.0,    -3.8018)
(    40.0,   -4.8915)
(    45.0,   -6.3667)
(    50.0,    -7.5892)
(    55.0,   -8.7219)
(    60.0,   -8.7481)
(    65.0,   -8.7479)
(    70.0,   -8.7479 )
};
   
\addplot[color=blue,  line width=0.3mm]
    coordinates 
    {
(    10.0,    -0.4412)
(    15.0,   -0.6904)
(    20.0,   -1.4800)
(    25.0,   -2.7413)
(    30.0,   -3.2353)
(    35.0,   -5.9360)
(    40.0,   -6.7380)
(    45.0,   -8.4460)
(    50.0,  -10.0034)
(    55.0,  -11.8965)
(    60.0,   -12.5584)
(    65.0,  -12.8307)
(    70.0,  -12.6745)
};
\legend{Hilbert-GP, Algorithm \ref{a1}}
\end{axis}
\end{tikzpicture}

\caption{$\ell = 0.1$}
\label{78c}
  \end{subfigure}
\caption{The $\log_{10}$ $L^{\infty}$-norm of the difference between the exact posterior mean 
and two basis function approximations --  Algorithm \ref{a1} and the approach of \cite{solin1} (Hilbert-GP). Gaussian process regression was performed on $[-1, 1]$ with 
squared exponential kernel with $\ell = 0.25, 0.2, 0.1$. Data was randomly generated 
according to $y_i = \sin(2x_i) + \epsilon_i$ for $i = 1,...,100$ where $x_i$ were 
generated uniformly at random on $[-1, 1]$.}
\label{78all}
\end{figure*}
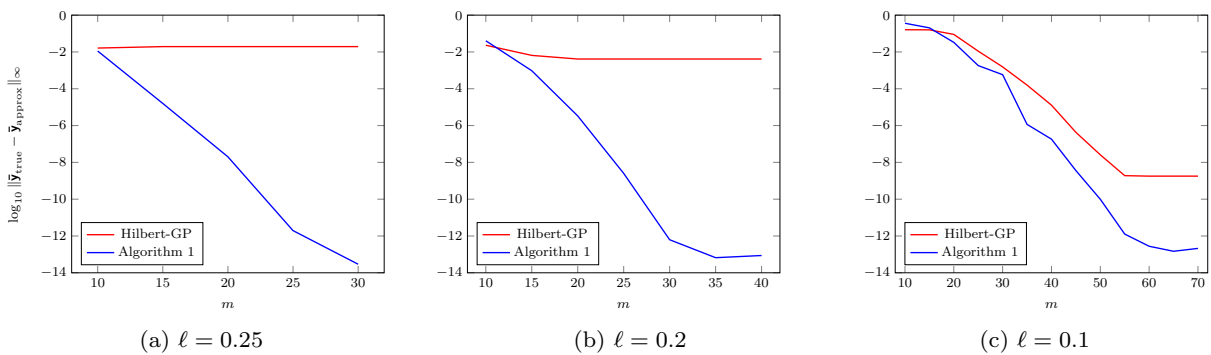

In Figure \ref{78all}, we illustrate the accuracy of our method and the 
method of \cite{solin1} for various numbers of basis functions. 
We perform Gaussian process regression 
on data simulated according to 
\begin{equation}
y_i = \sin(2x_i) + \epsilon_i
\end{equation}
for $i = 1,...,100$ where $x_i$ were 
generated uniformly at random on $[-1, 1]$ and $\epsilon_i$ were 
generated iid according to $\epsilon_i \sim \mathcal{N}(0, 1)$. 
We used a squared-exponential kernel with several different timescales.
We computed accuracy of each method by comparing to
the same calculation using a straightforward, 
exact $O(N^3)$ algorithm. We use the code 
of~\cite{solin_github} as the implementation of 
the method of \cite{solin1}.

The approach of \cite{solin1} has the 
desirable feature that the basis functions they use
are virtually
free to compute. However, the approximations used
to construct their basis function expansions can 
result in loss of accuracy, especially for kernels
without small timescale. 
Using the KL-expansion approach of this paper, we 
achieve high accuracy in the approximation of
basis functions by using high-order quadrature.
These methods do require an extra computational task -- the evaluation of an
eigendecomposition. However, for commonly-used 
kernels in $1$ and $2$ dimensions, evaluation of 
KL-expansions is negligible compared to 
subsequent regression tasks for problems 
with moderate amounts of data
(see Tables \ref{3021}, \ref{3031}, \ref{3001}).

The accuracy of \cite{solin1} and 
Algorithm \ref{a1} for various numbers of basis 
functions is illustrated in Figure \ref{78all} 
for several timescales.

\subsection{Gaussian processes in two dimensions}
For constructing KL-expansions in two dimensions,
 we implemented  Algorithm \ref{a4} and 
tested timing and accuracy on randomly generated 
data on the unit square in $\R^2$. We used the squared 
exponential covariance kernel 
\begin{equation}\label{610}
k(x,x') = \exp\bigg( -\frac{\|x-x'\|^2}{2\ell^2}\bigg)
\end{equation}
with $\ell = 0.25$. 
The data was defined on a square grid at the points 
$\{(x_{1,i}, x_{2, j})\}$ on the square $[-1, 1]^2$ where $x_{1, 1},...,x_{1,N}$ 
and $x_{2, 1},...,x_{2, N}$ are equispaced points on the interval $[-1, 1]$. 
The dependent variable $y_{i,j}$ was randomly generated according to 
\begin{equation}
y_{i} = -x_{2, i} + \sin(6x_{1, i}) + \epsilon_i
\end{equation}
where
\begin{equation}
\epsilon \sim \cN(0, \mtx{I}).
\end{equation}
In Figures \ref{705}, \ref{710}, and \ref{715}, we provide plots of the ground truth, 
the observed values, and the recovered posterior mean estimate. 

In Table \ref{3001} we demonstrate the performance of Algorithm
\ref{a4} as a function of the total number of nodes $n$. The column
labeled $\| k - k_n \|_2 $ measures the accuracy of the order-$n$ KL
expansion evaluated using $n$-nodes in the following sense
\begin{equation}\label{795}
\| k - k_n \|_2 = \Big( \int_D \int_D (k(x, y) - k_n(x, y))^2 dx \, dy \Big)^{1/2}
\end{equation}
where $D = [-1, 1]^2$ and $k_n$ is the effective covariance function of the numerically computed order-$n$ KL-expansion. Integral (\ref{795}) was computed using Gaussian quadrature. 

In Table \ref{3011} we provide timings and accuracy for computing
Gaussian process posterior mean estimates where all hyperparameters
are fixed. The column denoted $N$ is the number of data points and $n$
represents the number of nodes used for computing KL-expansions in
Algorithm \ref{a4}. ``KL time (s)" shows the total amount of time used
to compute KL-expansions and "regression time (s)" denotes the total time for
computing posterior mean and covariance estimates after computing
KL-expansions. This time includes constructing matrix $\mtx{X}$
of~\eqref{745} and computing the ridge regression.

In addition to the Fortran implementations that we used for the
numerical results of this section, we implemented Algorithm \ref{a1}
in Python and have made the code publicly available at
\begin{quote}
  \mbox{\url{https://github.com/pgree/kl_exps}}.
\end{quote}
The purpose of the
Python code is to provide a user-friendly implementation of Algorithm
\ref{a1} in a commonly-used language that can serve as a template for
general Gaussian process regression tasks.

\begin{figure*}[t]
  \centering
  \begin{subfigure}{.3\textwidth}
    \centering
    \includegraphics[width=.95\textwidth]{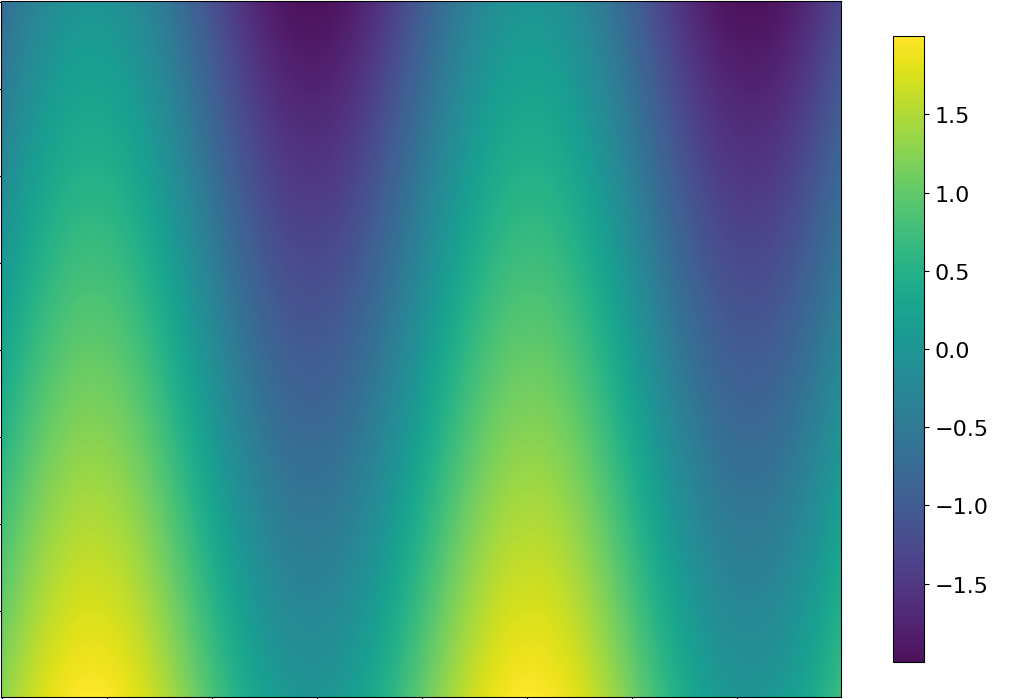}
    \caption{Ground truth}
    \label{705}
  \end{subfigure}
  \quad
  \begin{subfigure}{.3\textwidth}
    \centering
    \includegraphics[width=.95\textwidth]{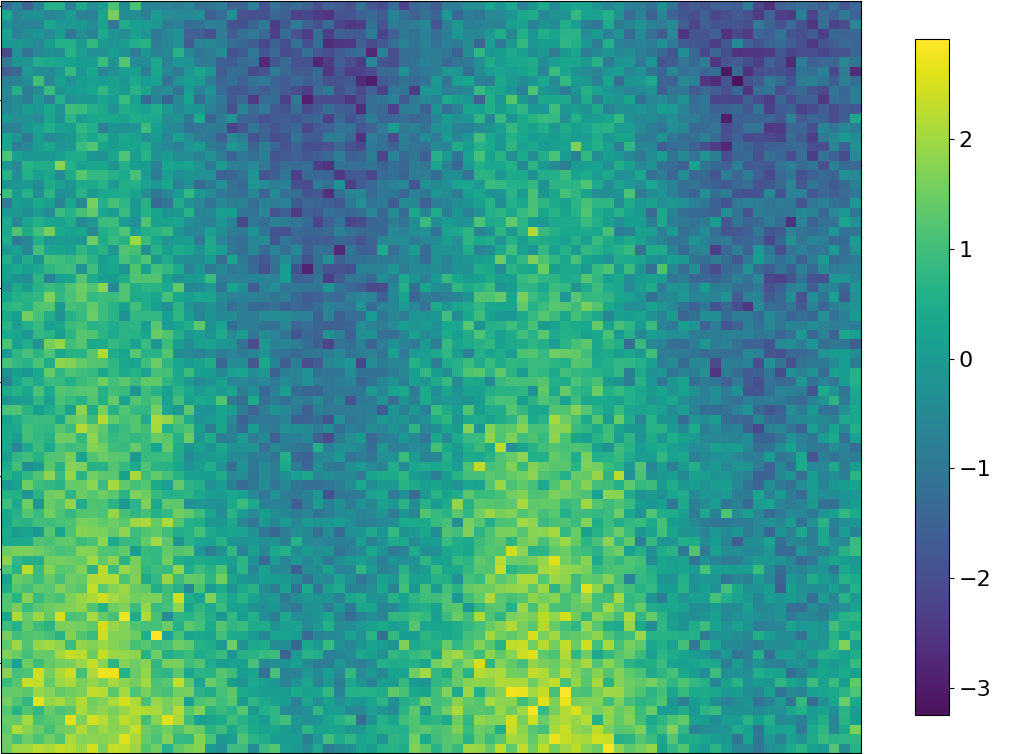}
    \caption{Data}
    \label{710}
  \end{subfigure}
  \quad
  \begin{subfigure}{.3\textwidth}
    \centering
    \includegraphics[width=.95\textwidth]{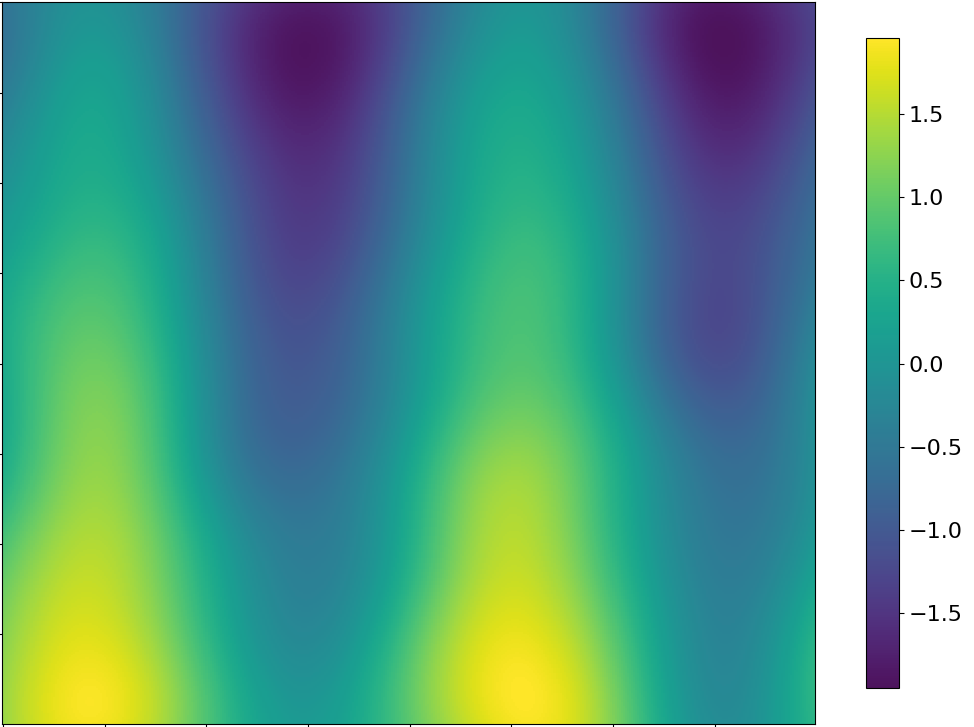}
    \caption{Posterior mean}
    \label{715}
  \end{subfigure}
  \caption{A demonstration of the performance of Algorithm~\ref{a4} on
    a Gaussian process over two dimensions with randomly generated
    data. The data is defined on a square grid on $[-1, 1]^2$ and
    $y_i = -x_{2, i} + \sin(6x_{1, i}) + \epsilon_i$ where
    $\epsilon_i$ is IID Gaussian noise.}
\end{figure*}

\section{Conclusions}
\label{sec:conclusions}
In this paper we introduce a class of numerical methods for converting
a Gaussian process \mbox{$f \sim \mathcal{GP}(0, k)$} on a
user-defined rectangular region of $\R^d$ into a KL-expansion of the
form
\begin{equation}\label{632}
\hat{f}(x) = \alpha_1f_1(x) + ... + \alpha_m f_m(x)
\end{equation}
where $\alpha \sim \cN(0, 1)$ are IID and $f_i$ are fixed basis functions
computed once during precomputation. The KL-expansion 
has several qualities that make it attractive for computationally 
demanding Gaussian process problems. 

\begin{enumerate}
\item The KL-expansion is optimal in the $L^2$ sense. Specifically,
for any order-$n$ basis function representation of a Gaussian process, 
the KL-expansion has an effective covariance kernel that best approximates
the true kernel in the $L^2$ sense. This allows for highly accurate and
compressed representations of Gaussian processes. For example, a Gaussian
process on the interval $[-1, 1]$ with squared exponential kernel 
\begin{equation}
k(x,x') = \exp\bigg( -\frac{\|x-x'\|^2}{2\ell^2}\bigg)
\end{equation}
with $\ell = 0.1$, can be approximated to an accuracy of better than 
$10^{-3}$ with an expansion of $25$ basis functions.

\item KL-expansions can be computed directly and efficiently using 
well-known high-order algorithms for discretizing integral operators. 
For smooth kernels, convergence of these algorithms is super-algebraic. 
For kernels with $j$ continuous derivatives, convergence is no worse
than $O(n^{-j-1})$ where $n$ is the number of discretization nodes. 

\item 
Efficient statistical inference can be facilitated with KL-expansions. 
When viewed as a weight-space problem, the canonical Gaussian 
process regression  
is converted to a ridge regression in the space of expansion
coefficients, where the number of coefficients is often significantly
smaller than the number of data points.  We also introduce an algorithm
for rapidly evaluating posterior moments of Bayesian models. 

\end{enumerate}

The methods of this paper will likely generalize naturally to some
families of non-Gaussian
stochastic processes, such as stable distributions. A stable distribution is
one where a linear combination of independent copies of the
distribution follows the same distribution as the original, up to
scale and location parameters~\cite{nolan1}.  For example, suppose
that we replace $\alpha_1,...,\alpha_m$ in (\ref{632}) with
uncorrelated stable distributions such that
\begin{equation}
\bbE[\alpha_i] = 0 \qquad \text{and} \qquad \bbE[\alpha_i \alpha_j] = \delta_{ij}.
\end{equation}
Then the KL-expansion
\begin{equation}
f(x) = \sum_{i=1}^m \alpha_i f_i(x)
\end{equation}
is in the same family of distributions as the $\alpha_i$ and satisfies
\begin{equation}
\bbE[f(x)] = 0 \qquad \text{and} \qquad \bbE[f(x)f(y)] \approx k(x, y).
\end{equation}
As a result, nearly all the numerical and analytical results of this paper
generalize naturally to stable processes. Analytic and numerical
investigations on this line of work are currently underway.

There are two main failure modes to the schemes of this paper. First,
covariance kernels that are less smooth (i.e. they have more slowly
decaying power spectra) require more terms in a KL-expansion for a
given level of accuracy. Second, the methods of this paper suffer from
the usual curse of dimensionality. For a given kernel of a Gaussian
process over $\R^d$, the number of terms needed in a KL-expansion for
a given level of accuracy grows like $m^d$ where $m$ is the number of
terms required in one dimension. 
Due to these drawbacks, the numerical
methods we describe are most useful in one and two dimensions, or in
three-dimensional problems with smooth kernels.

\section{Acknowledgements}
The authors are grateful to Paul Beckman, Dan Foreman-Mackey, Jeremy
Hoskins, Manas Rachh, and Vladimir Rokhlin for helpful discussions.
The first author is supported by Alfred P. Sloan Foundation. The second 
author is supported in part by
the Office of Naval Research under award
numbers~\#N00014-21-1-2383 and the Simons Foundation/SFARI (560651, AB).

\bibliographystyle{apalike}
\bibliography{refs}

\appendix
\section{Legendre polynomials}
\label{sec:legendre}

We now provide a brief overview of Legendre polynomials
and Gaussian quadrature~\cite{abram}. For a more in-depth analysis of these
tools and their role in (numerical) approximation
theory see, for example,~\cite{trefethen}.

In accordance with standard practice, we denote by~$P_i: [-1, 1] \to \R$ the Legendre
polynomial of degree~$i$ defined by the three-term recursion
\begin{equation}
P_{i+1}(x) = \frac{2i + 1}{i + 1} \, x \, P_{i}(x) - \frac{i}{i + 1} P_{i-1}(x)
\end{equation}
with initial conditions
\begin{equation}
P_0(x) = 1 \qquad \text{and} \qquad P_1(x) = x.
\end{equation}
Legendre polynomials are orthogonal on $[-1, 1]$ and satisfy
\[ 
\int_{-1}^{1} P_i(x) \, P_j(x) \, dx = 
\begin{cases} 
      0 & i \neq j, \\
      \frac{2}{2i+1} & i = j.
   \end{cases}
\]
We denote the $L^2$ normalized Legendre polynomials, $\overline{P}_i$, which
are defined by
\begin{equation}\label{873}
\overline{P}_i(x) = \sqrt{\frac{2i + 1}{2}} P_i(x).
\end{equation}
For each~$n$,  the Legendre polynomial~$P_n$ has~$n$ distinct roots which we
denote in what follows  by~$x_1,...,x_n$. Furthermore, for all $n$, there exist $n$
positive real numbers~$w_1,...,w_n$ such that for any polynomial~$p$ of
degree $ \leq 2n - 1$,
\begin{equation}
\int_{-1}^1 p(x) \, dx = \sum_{i=1}^n w_i \, p(x_i).
\end{equation}
The roots~$x_1,\ldots,x_n$ are usually referred to as order-$n$ Gaussian
nodes and~$w_1,...,w_n$ the associated Gaussian quadrature weights.
Classical Gaussian quadratures such as this are associated with many
families of orthogonal polynomials: Chebyshev, Hermite, Laguerre,
etc. The quadratures we mention above, associated with Legendre
polynomials, provide a high-order method for discretizing
(i.e. interpolating) and integrating square-integrable functions on a
finite interval. Legendre polynomials are the natural orthogonal
polynomial basis for
square-integrable functions on the interval~$[-1,1]$, and the
associated interpolation and quadrature formulae provide nearly
optimal approximation tools for these functions, even if they are not,
in fact, polynomials.

The following well-known lemma regarding interpolation using Legendre
polynomials will be used in the numerical schemes discussed in this
paper. A proof can be found in~\cite{stoer}, for example.
\begin{theorem}\label{230}
  Let $x_1,...,x_n$ be the order-$n$ Gaussian nodes and $w_1,...,w_n$
  the associated order-$n$ Gaussian weights. Then there exists an $n
  \times n$
  matrix~$\mtx{M}$ that maps a function tabulated at these Gaussian
  nodes to the corresponding Legendre expansion, i.e. the
  interpolating polynomial expressed in terms of Legendre
  polynomials. That is to say, defining~$\vct{f}$ by
\begin{equation}
  \vct{f} = \lp f(x_1) \cdots f(x_n) \rp\tran,
\end{equation}
the vector
\begin{equation}
\vct{\alpha} = \mtx{M}\vct{f}
\end{equation}
are the coefficients of the order-$n$ Legendre expansion~$p$ such that
\begin{equation}
  \begin{aligned}
    p(x_j) &= \sum_{i=1}^n \alpha_i  \, P_{i-1}(x_j)\\
    &= f(x_j),
  \end{aligned}
\end{equation}
where $\alpha_i$ denotes the $i$th entry of the vector $\vct{\alpha}$.
\end{theorem}
From a computational standpoint, algorithms for efficient evaluation of Legendre 
polynomials and Gaussian nodes and weights are  
available in standard software packages (e.g. \cite{driscoll2014}). 
Furthermore, the entries of the matrix $\mtx{M}$ can be computed 
directly via $M_{i, j} = w_j P_{i-1}(x_j)$.

\end{document}